\title{To Infinity and Beyond\footnote{This work was partially supported by the French \emph{Agence Nationale pour la~Recherche},
through the Project {\bf MealyM} ANR-JS02-012-01.
}
}
\author{Ines Klimann\\
Univ Paris Diderot, Sorbonne Paris Cit\'e, IRIF,\\
  UMR 8243 CNRS, F-75013 Paris, France,\\
  \texttt{klimann@irif.fr}
}
\tikzstyle{every state}=[minimum size=12pt,inner sep=0pt]
\newcommand{\ie}{\emph{i.e.}\xspace}
\newcommand{\wrt}{w.r.t. }
\newcommand{\resp}{\emph{resp.}\xspace}
\newcommand{\eg}{\emph{e.g.}\xspace}
\newcommand{\aut}[1]{{\mathcal #1}}
\newcommand{\mot}[1]{{\mathbf {#1}}}
\newcommand{\pres}[1]{\langle{#1}\rangle}
\newcommand{\presm}[1]{\pres{{#1}}_{+}}
\DeclareMathOperator{\mz}{\mathfrak m}
\newcommand{\lacroix}{\tikz[baseline=-.5ex]{\draw[->,>=latex] (-.2ex,0) -- (3.8ex,0); \draw[->,>=latex] (1.7ex,2ex) -- (1.7ex,-2.3ex);}}
\newcommand{\N}{{\mathbb N}}
\DeclareMathOperator{\CC}{cc}
\newcommand{\cc}[1]{\CC({#1})}
\newcommand{\ccNerode}[1]{\llbracket{#1}\rrbracket}
\DeclareMathOperator{\llN}{\llbracket}
\DeclareMathOperator{\rrN}{\rrbracket\hspace*{-1.1ex}\bullet}
\newcommand{\Nerendq}[2][q]{\llN{{#1}^{#2}}\rrN\xspace}
\newcommand{\follow}[2]{\{{#1}?\looparrowright{#2}\}}
\newcommand{\precede}[2]{\{?{#1}\looparrowright{#2}\}}
\newcommand{\lettre}[2]{{#2}\text{\rm \textlquill}{#1}\text{\rm \textrquill}}
\theoremstyle{plain}
\newtheorem{theorem}{Theorem}
\newtheorem{proposition}[theorem]{Proposition}
\newtheorem{lemma}[theorem]{Lemma}
\newtheorem{corollary}[theorem]{Corollary}
\begin{document}

\maketitle

\begin{abstract}
We prove that if a group generated by a bireversible Mealy automaton
contains an element of infinite order, its growth blows up and is necessarily
exponential. As a direct consequence, no infinite virtually nilpotent group can be generated by
a bireversible Mealy automaton.
\end{abstract}

\section*{}

The study on how (semi)groups grow has been highlighted since Milnor's
question on the existence of groups of intermediate growth (faster
than any polynomial and slower than any exponential) in
1968~\cite{milnor}, and the very first example of such a group given
by Grigorchuk~\cite{grigorch:degrees}. Uncountably many examples have
followed this first one, see for
instance~\cite{Grigorchuk85}. Bartholdi and Erschler have even
obtained results on precise computations of growth, in particular they
proved that if a function satisfies some frame property, then there
exists a finitely generated group with growth equivalent to
it~\cite{BE14}. Besides, for now, intermediate growth and automaton
groups, that is groups generated by Mealy automata, seem to have a
very strong link, since the only known examples of intermediate growth
groups are either automaton groups, or based on such groups.

There exists no criterium to test if a Mealy automaton generates a group of
intermediate growth and it is not even known if this property is
decidable. However, there is no known example in the litterature of a
bireversible Mealy automaton generating an intermediate growth group
and it is legitimate to wonder if it is possible. This
article enter in this scope. We prove that if there exists at least
one element of infinite order in a group generated by a bireversible
Mealy automaton, then its growth is necessarily exponential. It has
been conjuctered, and proved in some cases~\cite{GK17}, that an infinite group
generated by a bireversible Mealy automaton always has an element of
infinite order, which suggests that, indeed, a group
generated by a bireversible Mealy automaton either is finite, or has
exponential growth.

Finally, let us mention the work by Brough and Cain to obtain some
criteria to decide if a semigroup is an automaton semigroup~\cite{BC17}. Our
work can be seen as partially answering a similar question: can a
given group be generated by a bireversible Mealy automaton? A
consequence of our result is that no infinite virtually nilpotent group can be.

This article is organized as follows. In Section~\ref{sec-basic}, we
define the automaton groups and the growth of a group,
and give some properties on the connected components of the powers
of a Mealy automaton. In Section~\ref{sec-equiv}, we study the
behaviour of some equivalence classes of words on the state set of a
Mealy automaton. Finally, the main result takes place in Section~\ref{sec-main}.

%----------------------------------------------------------------------------------------------------------------------------------
%----------------------------------------------------------------------------------------------------------------------------------
\section{Basic notions}\label{sec-basic}

In all the article, if \(E\) is a finite set, its cardinality is
denoted by \(|E|\). A \emph{finite word\/} of \emph{length\/}~\(n\) 
on~\(E\) is a finite sequence of~\(n\) elements of~\(E\) and is
denoted classically as the concatenation of its elements.
The set of finite words over~\(E\) is denoted by~\(E^*\), the
set of non-empty finite words by~\(E^+\), and the set of words of
length~\(n\) by~\(E^n\). In general the elements
of~\(E\) are written in plain letters, \eg~\(q\), while the
words on~\(E\) are written in bold letters, \eg \(\mot{u}\). The
length of~\(\mot{u}\) is denoted by~\(|\mot{u}|\), its letters are
numbered from~\(0\) to~\(\mot{u}-1\) and if \(i\) is an integer, its
\((i\mod|\mot{u}|)\)-th letter is denoted by \(\lettre{i}{\mot{u}}\);
for example its first letter is~\(\lettre{0}{\mot{u}}\), while its
last letter is \(\lettre{-1}{\mot{u}}\). If
\(L\) is a set of words on~\(E\), \(\lettre{i}{L}\) denotes the set
\(\{\lettre{i}{\mot{u}} \mid \mot{u}\in L\}\).

\subsection{Semigroups and groups generated by Mealy automata}
We first recall the formal definition of an automaton. A {\em (finite, deterministic, and complete) automaton} is a
triple
\(
\bigl( Q,\Sigma,\delta = (\delta_i\colon Q\rightarrow Q )_{i\in \Sigma} \bigr)
\),
where the \emph{state set\/}~$Q$
and the \emph{alphabet\/}~$\Sigma$ are non-empty finite sets, and
the~\(\delta_i\) are functions.

A \emph{Mealy automaton\/} is a quadruple
\(\aut{A}=( Q, \Sigma, \delta,\rho)\),
such that \((Q,\Sigma,\delta)\) and~\((\Sigma,Q,\rho)\) are both
automata.
In other terms, a Mealy automaton is a complete, deterministic,
letter-to-letter transducer with the same input and output
alphabet. Its \emph{size\/} is the cardinality of its state set and is
denoted by \(\#\aut{A}\).

The graphical representation of a Mealy automaton is
standard, see Figure~\ref{fig-aleshin}. But, for practical reasons, we use sometimes other
graphical representations for the transitions. For example the
transition from~\(x\) to~\(z\) with input letter~\(0\) and output
letter~\(1\) in the automaton of Figure~\ref{fig-aleshin} can be
represented
\[\text{either by}\quad x\xrightarrow{0|1}z\:,\qquad\text{or by}\quad\begin{array}{ccc}
& 0 &\\
x & \lacroix & z\\
& 1 &\\
\end{array}\:.\]

\begin{figure}[h]
\centering
\begin{tikzpicture}[->,>=latex,node distance=1.8cm]
\tikzstyle{every state}=[minimum size=12pt,inner sep=0pt]
\node[state] (z) {\(z\)};
\node[state] (y) [below left of=z] {\(y\)};
\node[state] (x) [above left of=y] {\(x\)};
\node (so) [left of=y] {};
\node (ne) [right of=y] {};
\path (x) edge node[below,near start]{\(1|0\)} (y)
      (x) edge [bend left] node[above]{\(0|1\)} (z)
      (z) edge [bend left] node[above]{\(0|0\), \(1|1\)} (x)
      (y) edge node[below,near end]{\(1|0\)} (z)
      (y) edge [loop right] node{\(0|1\)} (y);
\end{tikzpicture}
\caption{The Aleshin automaton.}\label{fig-aleshin}
\end{figure}
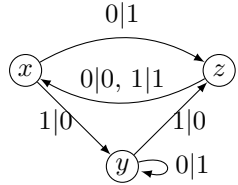

Let~\(\aut{A} = (Q,\Sigma, \delta,\rho)\) be a Mealy automaton.
Each state \(q\in Q\) defines a mapping from \(\Sigma^*\) into itself,
recursively by:
\begin{equation*}
\forall i \in \Sigma, \ \forall \mot{s} \in \Sigma^*, \qquad
\rho_q(i\mot{s}) = \rho_q(i)\rho_{\delta_i(q)}(\mot{s}) \:.
\end{equation*}

The image of the empty word is itself. For each $q\in Q$,
the mapping~\(\rho_q\) is length-preserving and prefix-preserving.
We say that~\(\rho_q\) is the function \emph{induced\/} by \(q\).
For~$\mot{u}=q_1\cdots q_n \in Q^n$ with~$n>0$, set
\(\rho_\mot{u}\colon\Sigma^* \rightarrow \Sigma^*, \rho_\mot{u} = \rho_{q_n}
\circ \cdots \circ \rho_{q_1} \:\).

The semigroup of mappings from~$\Sigma^*$ to~$\Sigma^*$ generated by
$\{\rho_q, q\in Q\}$ is called the \emph{semigroup generated
  by~$\aut{A}$\/} and is denoted by~$\presm{\aut{A}}$.

\medskip

A Mealy automaton \(\aut{A}=(Q,\Sigma,\delta, \rho)\) is
\emph{invertible\/} if the functions \(\rho_q\) are permutations
of the alphabet~\(\Sigma\). In this case, the functions induced by the states are
permutations on words of the same length and thus we may consider
the group of mappings from~$\Sigma^*$ to~$\Sigma^*$ generated by
$\{\rho_q, q\in Q\}$: it is called the \emph{group generated
  by~$\aut{A}$\/} and is denoted by~$\pres{\aut{A}}$.

When \(\aut{A}\) is invertible, define its \emph{inverse\/}
\(\aut{A}^{-1}\) as the Mealy automaton with state set~\(Q^{-1}\), a
disjoint copy of~\(Q\), and alphabet~\(\Sigma\), where the transition \(p^{-1}
\xrightarrow{j\mid i} q^{-1}\) belongs to~\(\aut{A}^{-1}\) if and only
if the transition~\(p \xrightarrow{i\mid j} q\) belongs
to~\(\aut{A}\). Clearly the action induced by the state~\(p^{-1}\) of
\(\aut{A}^{-1}\) is the reciprocal of the action induced by the
corresponding state~\(p\) in~\(\aut{A}\).

\smallskip

A Mealy automaton \((Q,\Sigma,\delta, \rho)\) is
\emph{reversible\/} if the functions \(\delta_i\) induced on~\(Q\) by
the input letters of the transitions are
permutations. The connected components of a reversible
automaton are strongly connected.
In a reversible automaton of state set~\(Q\) and
alphabet~\(\Sigma\), for any word~\(\mot{s}\in\Sigma^*\) and any
state~\(q\), there exists exactly one path in the automaton with
label~\(\mot{s}\) and final state~\(q\), hence we can consider the
\emph{backtrack application\/} induced by~\(q\): it associates to \(\mot{s}\)
the output label~\(\mot{t}\in\Sigma^{|\mot{s}|}\) of this single path.

A Mealy automaton is \emph{coreversible\/} if the functions induced
on~\(Q\) by the letters as output letters of the transitions are
permutations.

A Mealy automaton is \emph{bireversible\/} if it is both reversible
and coreversible. It is quite simple to see that the applications
and the backtrack applications induced by the states of a bireversible
automaton are permutations.

\medskip

Two Mealy automata are said to be \emph{isomorphic\/} if they are
identical up to the labels of their states.

\medskip

We extend to \(\delta\) the former notations on
\(\rho\), in a natural way. Hence \(\delta_i\colon Q^*\rightarrow Q^*,
i\in \Sigma\), are the functions extended to~\(Q^*\), and
for~$\mot{s}=i_1\cdots i_n \in \Sigma^n$ with~$n>0$, we
set~\(\delta_\mot{s}\colon Q^* \rightarrow Q^*, \ \delta_\mot{s} =
\delta_{i_n}\circ \cdots \circ \delta_{i_1}\).

%---------------------------------------------------------------------------
%---------------------------------------------------------------------------

\subsection{Growth of a semigroup or of a group}
Let \(H\) be a semigroup generated by a finite set \(S\). The
\emph{length\/} of an element \(g\) of the 
semigroup, denoted by \(|g|\), is the length of its shortest
decomposition as a product of generators: \[|g| = 
\min\{ n\mid \exists s_1,\dots,s_n\in S,\, g=s_1\cdots
s_n\}\:.\]

The \emph{growth function\/} \(\gamma_H^S\) of the semigroup \(H\) with
respect to the generating set~\(S\) enumerates the elements of~\(H\)
with respect to their length:
\[\gamma_H^S(n) = |\{g\in H\,;\, |g|\leq n\}|\:.\]
The \emph{growth functions\/} of a group are defined similarly by taking
symmetrical generating sets.

\medskip

The growth functions corresponding to two generating sets are
equivalent~\cite{Mann}, so we may define the \emph{growth\/} of a group
or a semigroup as the equivalence class of its growth
functions. Hence, for example, a finite (semi)group has a bounded 
growth, while an infinite abelian (semi)group has a polynomial growth,
and a non-abelian free (semi)group has an exponential growth.

\bigskip

It is quite easy to obtain groups of polynomial or exponential
growth. Answering a question of Milnor~\cite{milnor}, Grigorchuk gave
an example of an automaton group of intermediate
growth~\cite{grigorch:degrees}: faster than any polynomial, slower than
any exponential, opening thus a new classification criterium for
groups, that has been deeply studied since this seminal
article (see~\cite{GP06} and references therein). Besides,
intermediate growth and automaton groups 
seem to have a very strong link, since the only known examples of
intermediate growth groups in the literature are based on automaton groups.

\medskip

Note an important point for our purpose: let \(G\) be a group
finitely generated by~\(S\), and \((I_n)_{n>0}\) a sequence of subsets
of \(G\), compatible with the length of the elements, \ie the sets~\(I_n\)
are pairwise distinct and the elements of~\(I_n\) have all length less
than or equal to~\(n\). The growth function of \((I_n)_{n>0}\) is given by
\((\sum_{k\leq n}|I_n|)_{n>0}\); if it grows exponentially, then so does~\(G\).
In the same spirit, a group which admits a subgroup of exponential
growth grows exponentially.

\subsection{The powers of a Mealy automaton and their connected components}\label{sec-powers}
The powers of a Mealy automaton have been shown to play an important
role in the finiteness and the order problem for an automaton (semi)groups, as
highlighted in~\cite{Kli13,klimann_ps:3state,GK17}. The \emph{\(n\)-th
  power\/} of the automaton 
\(\aut{A}=(Q,\Sigma,\delta,\rho)\) is the Mealy automaton
\begin{equation*}
\aut{A}^n = \bigl( \ Q^n,\Sigma, (\delta_i\colon Q^n \rightarrow
Q^n)_{i\in \Sigma}, (\rho_{\mot{u}}\colon \Sigma \rightarrow \Sigma
)_{\mot{u}\in Q^n} \ \bigr)\enspace.
\end{equation*}

Note that the powers of a reversible (\resp bireversible) Mealy
automaton are reversible (\resp bireversible).

The (semi)group generated by a connected component of some power
of~\(\aut{A}\) is a sub(semi)group of the (semi)group generated
by~\(\aut{A}\).

Let \(\mot{u}\) and \(\mot{v}\) be elements of \(Q^+\) and \(\aut{C}\)
be a connected component of some power of \(\aut{A}\): \(\mot{v}\)
\emph{can follow \(\mot{u}\) in \(\aut{C}\)\/} if \(\mot{uv}\) is the 
prefix of some state of \(\aut{C}\). We denote by
\(\follow{\mot{u}}{\aut{C}}\) the set of the states which
can follow \(\mot{u}\) in~\(\aut{C}\):
\[\follow{\mot{u}}{\aut{C}} = \{q\in Q\mid \mot{u}q\text{ is the
  prefix of some state of }\aut{C}\}\:.\]

We define similarly the fact that \(\mot{v}\) \emph{can precede\/}
\(\mot{u}\) in \(\aut{C}\) if \(\mot{vu}\) is the 
suffix of some state of \(\aut{C}\), and we introduce the set
\[\precede{\mot{u}}{\aut{C}} = \{q\in Q\mid q\mot{u}\text{ is the
  suffix of some state of }\aut{C}\}\:.\]

The aim of this section is to give some intuition on the links between
the connected components of consecutive powers of~\(\aut{A}\). Since a
word can be extended with a prefix or a suffix, most of the results
exposed here are expressed in both cases, but only the first result is
proved in both cases, to show how bireversibility allow to consider
similarly the actions and the backtrack actions.

\begin{lemma}\label{lem-setF}
Let \(\aut{A}\) be a bireversible Mealy automaton with state
set~\(Q\) and \(\aut{C}\) a connected component of one of its powers. If
\(\mot{u}\in Q^+\) is a proper prefix of some state of~\(\aut{C}\), then the cardinality
of the set \(\follow{\mot{u}}{\aut{C}}\) depends only on the length of
\(\mot{u}\).
\end{lemma}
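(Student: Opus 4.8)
The plan is to show that the map $\mot{u}\mapsto |\follow{\mot{u}}{\aut{C}}|$ is constant on words of a fixed length by exhibiting, for any two proper prefixes $\mot{u},\mot{u}'$ of states of $\aut{C}$ with $|\mot{u}|=|\mot{u}'|$, a bijection between $\follow{\mot{u}}{\aut{C}}$ and $\follow{\mot{u}'}{\aut{C}}$. Since $\aut{C}$ is a connected component of a power, and the power of a reversible automaton is reversible, $\aut{C}$ is strongly connected; hence it suffices to produce such a bijection whenever $\mot{u}'$ is obtained from $\mot{u}$ by a single transition of $\aut{C}$, \ie when there is a letter $i\in\Sigma$ and a state $\mot{w}$ of $\aut{C}$ with prefix $\mot{u}$ such that $\delta_i(\mot{w})$ has prefix $\mot{u}'=\delta_i(\mot{u})$. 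I would then chain these single-step bijections along a path in $\aut{C}$ connecting any two states, so that all prefixes of a given length occurring in $\aut{C}$ have sets $\follow{\cdot}{\aut{C}}$ of equal cardinality.

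First I would fix $\mot{u}$ of length $k$ that is a proper prefix of some state $\mot{u}\mot{v}$ of $\aut{C}$, and fix an input letter $i\in\Sigma$. Reversibility of the power guarantees that $\delta_i$ acts as a permutation on the state set $Q^n$ of the power, so it maps states of $\aut{C}$ to states of the component reached by following $i$; because $\aut{C}$ is strongly connected one can always return, so without loss of generality we compare $\follow{\mot{u}}{\aut{C}}$ with $\follow{\delta_i(\mot{u})}{\aut{C}}$ within the relevant components. The key observation is that reading the letter $i$ respects the prefix structure: if $\mot{u}q$ is a prefix of a state of $\aut{C}$ (so $q\in\follow{\mot{u}}{\aut{C}}$), then applying $\delta_i$ to that state yields a state whose prefix of length $k+1$ is $\delta_i(\mot{u}q)$, and by the recursive definition of $\delta$ on words this equals $\delta_i(\mot{u})\,\delta_j(q)$ where $j=\rho_{\mot{u}}(i)$ is the letter that $\mot{u}$ outputs on input $i$. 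Thus reading $i$ sends $q\in\follow{\mot{u}}{\aut{C}}$ to $\delta_j(q)\in\follow{\delta_i(\mot{u})}{\aut{C}}$.

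The candidate bijection is therefore $q\mapsto\delta_j(q)$ with $j=\rho_{\mot{u}}(i)$ depending only on $\mot{u}$ and $i$. To see it is a bijection I would invoke reversibility once more: each $\delta_j\colon Q\to Q$ is a permutation, so $q\mapsto\delta_j(q)$ is injective, and it is surjective onto $\follow{\delta_i(\mot{u})}{\aut{C}}$ because the inverse transition (available since $\aut{C}$ is strongly connected, or directly since $\delta_i$ is invertible on the power) carries any follower of $\delta_i(\mot{u})$ back to a follower of $\mot{u}$. This establishes $|\follow{\mot{u}}{\aut{C}}|=|\follow{\delta_i(\mot{u})}{\aut{C}}|$ for a single step, and composing along a path realizing strong connectivity gives equality of cardinalities for all length-$k$ prefixes occurring in $\aut{C}$.

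The main obstacle I anticipate is bookkeeping the prefix-of-length-$(k+1)$ correspondence carefully: one must check that $\delta_i$ applied to a state of $\aut{C}$ with prefix $\mot{u}$ really does produce a state of $\aut{C}$ (not merely of the ambient power) whose length-$(k+1)$ prefix is $\delta_i(\mot{u})\delta_j(q)$, and that the strong connectivity of $\aut{C}$ lets us reach every length-$k$ prefix from a fixed one using only transitions internal to $\aut{C}$. The delicate point is that the single-letter map $q\mapsto\delta_j(q)$ need not fix the set $\follow{\mot{u}}{\aut{C}}$ — it moves it to $\follow{\delta_i(\mot{u})}{\aut{C}}$ — so the argument genuinely relies on chaining steps around the strongly connected component rather than on any single transition stabilizing the follower set. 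Once strong connectivity and the permutation property of the $\delta_j$ are combined, the dependence on the actual word $\mot{u}$ dissolves and only the length $k$ survives, which is exactly the claim.
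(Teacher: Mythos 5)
Your proposal is correct and follows essentially the same route as the paper's proof: use strong connectivity of \(\aut{C}\) to carry one prefix to another via the transitions, and use reversibility to see that the induced action of the output word on \(Q\) permutes the followers, giving equal cardinalities. The only difference is that you decompose the connecting word into single-letter steps and chain bijections, whereas the paper applies the whole word \(\mot{s}\) at once and concludes via two opposite inequalities; this is a stylistic variation, not a different argument.
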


\begin{proof}
Suppose that \(\mot{u'}\) is such that \(\mot{uu'}\) is a state
of~\(\aut{C}\), and let \(\mot{v}\) be a prefix of some
state~\(\mot{vv'}\) of~\(\aut{C}\) with the same length
as~\(\mot{u}\). Since \(\aut{C}\) is a connected component in a
reversible Mealy automaton, it is strongly connected, so there exists
a word \(\mot{s}\in\Sigma^*\) such that \(\delta_{\mot{s}}(\mot{uu'})=
\mot{vv'}\). Now, consider the action induced by~\(\mot{s}\) on
\(\mot{u}p\), for \(p\in\follow{\mot{u}}{\aut{C}}\):
\[\begin{array}{ccc}
& \mot{s} &\\
\mot{u} & \lacroix & \mot{v}\\
& \mot{s'} &\\
p & \lacroix & p'\\
\end{array}\]
Since the automaton~\(\aut{A}\) is reversible, the action induced by~\(\mot{s'}\)
is a permutation of~\(Q\), and we have
\[|\follow{\mot{u}}{\aut{C}}|\leq |\follow{\mot{v}}{\aut{C}}|\:.\] The
reciprocal inequality is obtained symmetrically.
\end{proof}

\begin{lemma}
Let \(\aut{A}\) be a bireversible Mealy automaton with state
set~\(Q\) and \(\aut{C}\) a connected component of one of its powers. If
\(\mot{u}\in Q^+\) is a proper suffix of some state of~\(\aut{C}\), then the cardinality
of the set \(\precede{\mot{u}}{\aut{C}}\) depends only on the length
of \(\mot{u}\).
\end{lemma}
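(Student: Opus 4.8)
The plan is to mirror the proof of Lemma~\ref{lem-setF}, exchanging the roles of prefixes and suffixes and, accordingly, replacing the forward action of the automaton by its backtrack action: where the previous proof used reversibility to move a following letter by a genuine transition, here I would use coreversibility to move a preceding letter by the dual, output-driven transition. Concretely, suppose \(\mot{u'u}\) is a state of \(\aut{C}\) with \(\mot{u}\) the chosen proper suffix, and let \(\mot{v}\) be a suffix of some state \(\mot{v'v}\) of \(\aut{C}\) with \(|\mot{v}|=|\mot{u}|\). Since \(\aut{C}\) is a connected component of a reversible automaton it is strongly connected, so there is a word \(\mot{s}\in\Sigma^*\) with \(\delta_{\mot{s}}(\mot{u'u})=\mot{v'v}\); reading the action of \(\mot{s}\) through \(\mot{u'u}\), the prefix block \(\mot{u'}\) emits an intermediate word \(\mot{s_1}\) that drives the suffix block, so that \(\delta_{\mot{s_1}}(\mot{u})=\mot{v}\).

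Then I would build an injection \(\precede{\mot{u}}{\aut{C}}\to\precede{\mot{v}}{\aut{C}}\). Given \(q\in\precede{\mot{u}}{\aut{C}}\), the letter \(q\) sits immediately to the left of \(\mot{u}\) in some state of \(\aut{C}\), so its output feeds the suffix block; I would fix that output to be exactly \(\mot{s_1}\) and read the stacked diagram
\[\begin{array}{ccc}
& \mot{r} &\\
q & \lacroix & q'\\
& \mot{s_1} &\\
\mot{u} & \lacroix & \mot{v}\\
\end{array}\]
so that \(q\) is sent to the end state \(q'\) of the unique path leaving \(q\) with output \(\mot{s_1}\), while the suffix is driven from \(\mot{u}\) to \(\mot{v}\). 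Closing this diagram inside \(\aut{C}\) — choosing, since the functions induced by the states are permutations, a global input that makes the suffix block \(q\mot{u}\) receive the prescribed input \(\mot{r}\), and invoking that \(\aut{C}\) is invariant under the actions — shows that \(q'\mot{v}\) is again a suffix of a state of \(\aut{C}\), that is \(q'\in\precede{\mot{v}}{\aut{C}}\).

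It remains to see that \(q\mapsto q'\) is injective, and this is exactly where coreversibility enters: for a fixed output word \(\mot{s_1}\), sending a state to the endpoint of the unique path it starts with that output is a composition of the permutations of \(Q\) induced by the output letters, hence a permutation of \(Q\). The map is therefore injective, giving \(|\precede{\mot{u}}{\aut{C}}|\le|\precede{\mot{v}}{\aut{C}}|\), and the reverse inequality follows by exchanging \(\mot{u}\) and \(\mot{v}\). The main obstacle, and the point where the argument is not a literal transcription of Lemma~\ref{lem-setF}, is the asymmetry that a suffix does not directly see the global input: one cannot simply apply a single fixed word \(\mot{s}\) and read off the induced map on the preceding letter, as was done for the following letter. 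The fix is to read from the right, fixing the \emph{output} of the preceding letter and using coreversibility in place of reversibility; verifying that the resulting letter still precedes \(\mot{v}\) within \(\aut{C}\) is the one routine but essential check.
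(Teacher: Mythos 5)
Your argument is essentially identical to the paper's: fix the intermediate word \(\mot{s_1}\) (the paper's \(\mot{s'}\)) produced when a word of \(\Sigma^*\) drives \(\mot{u'u}\) to \(\mot{v'v}\), map each \(p\in\precede{\mot{u}}{\aut{C}}\) to the endpoint of the unique path out of \(p\) with output \(\mot{s_1}\), and use coreversibility to see this is a permutation of \(Q\), hence injective into \(\precede{\mot{v}}{\aut{C}}\), with the reverse inequality by symmetry. The proposal is correct and even makes explicit the step (lifting \(\mot{r}\) to a global input via invertibility of the induced actions) that the paper leaves implicit.
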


\begin{proof}
Suppose that \(\mot{u'}\) is such that \(\mot{u'u}\) is a state
of~\(\aut{C}\), and let \(\mot{v}\) be a suffix of some
state~\(\mot{v'v}\) of~\(\aut{C}\) with the same length
as~\(\mot{u}\). Since \(\aut{C}\) is a connected component in a
reversible Mealy automaton, it is strongly connected, so there exist
words \(\mot{s},\mot{t}\in\Sigma^*\) such that \(\delta_{\mot{s}}(\mot{u'u})=
\mot{v'v}\) and \(\rho_{\mot{u'u}}(\mot{s}) = \mot{t}\):
\[\begin{array}{ccc}
& \mot{s} &\\
\mot{u'} & \lacroix & \mot{v'}\\
& \mot{s'} &\\
\mot{u} & \lacroix & \mot{v}\\
& \mot{t} &
\end{array}\]
Now, consider the backtrack action induced by~\(\mot{t}\) on
\(p\mot{u}\), for \(p\in\precede{\mot{u}}{\aut{C}}\):
\[\begin{array}{ccc}
p & \lacroix & p'\\
& \mot{s'} &\\
\mot{u} & \lacroix & \mot{v}\\
& \mot{t} &
\end{array}\]
Since the automaton~\(\aut{A}\) is bireversible, the backtrack action
induced by~\(\mot{s'}\) is a permutation of~\(Q\), and we have
\[|\precede{\mot{u}}{\aut{C}}|\leq |\precede{\mot{v}}{\aut{C}}|\:.\] The
reciprocal inequality is obtained symmetrically.
\end{proof}

Consider a state \(q\) of \(\aut{A}\). For any integer \(n>0\), we
denote by \(\cc{q^n}\) the connected component of \(q^n\) in
\(\aut{A}^n\). The sequence of such components has some
properties which we give here. These properties can be seen as
properties of the branch represented by~\(q^{\omega}\) in the Schreier
trie of~\(\aut{A}\) (also known as the orbit tree of the dual of~\(\aut{A}\))
which has been introduced in~\cite{klimann_ps:3state,GK17} (to keep
this article self-contained, we give here only the properties of this
branch, but for a more global intuition on the constructions, the
reader can  consult these references).

The first point is given by Lemma~\ref{lem-setF}: for any \(n>0\),
the component 
\(\cc{q^{n+1}}\) can be seen as several full copies of the component
\(\cc{q^n}\); indeed, if \(\mot{u}\) and \(\mot{v}\) are states of
\(\cc{q^n}\), then in \(\cc{q^{n+1}}\) there are as many states  with
prefix~\(\mot{u}\) as states with prefix~\(\mot{v}\), \ie
\[\forall\mot{u}\in\cc{q^n},\forall\mot{v}\in\cc{q^n},\,
|\follow{\mot{u}}{\cc{q^{n+1}}}| =
|\follow{\mot{v}}{\cc{q^{n+1}}}|\:.\]
Hence the ratio between the size of~\(\cc{q^{n+1}}\) and the size
of~\(\cc{q^n}\) is necessarily an integer: it is the cardinality
of the set \(\follow{\mot{u}}{\cc{q^{n+1}}}\) for any
state~\(\mot{u}\) of~\(\cc{q^n}\), and in particular
for~\(\mot{u}=q^n\).

We define by \[\left(\frac{\#\cc{q^{n+1}}}{\#\cc{q^n}}\right)_{n>0}=
\left(|\follow{q^n}{\cc{q^{n+1}}}|\right)_{n>0}\]
the \emph{sequence of ratios\/} associated to the state \(q\).

\begin{lemma}\label{lem-follow}
  Let \(\aut{A}\) be a bireversible Mealy automaton with state
set~\(Q\), \(q\in Q\) a state of~\(\aut{A}\), and \(\mot{u}\) and
\(\mot{v}\) be two elements of~\(Q^+\). If \(\mot{uv}\) is a state of
\(\cc{q^{|\mot{uv}|}}\), then \(\mot{v}\) is a state of
\(\cc{q^{|\mot{v}|}}\)
and \[\follow{\mot{uv}}{\cc{q^{|\mot{uv}|+1}}}\subseteq
\follow{\mot{v}}{\cc{q^{|\mot{v}|+1}}}\:.\]
\end{lemma}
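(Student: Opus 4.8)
The plan is to exploit how a transition in a power automaton factors through a concatenation of states. First I would record the basic factorization: for \(\mot{u},\mot{v}\in Q^+\) and any input word \(\mot{s}\in\Sigma^*\), reading \(\mot{s}\) through the concatenation \(\mot{uv}\) makes the prefix \(\mot{u}\) act on \(\mot{s}\) while the suffix \(\mot{v}\) acts on the output \(\rho_\mot{u}(\mot{s})\) that \(\mot{u}\) produces. In cross-diagram form,
\[\begin{array}{ccc}
& \mot{s} &\\
\mot{u} & \lacroix & \delta_\mot{s}(\mot{u})\\
& \rho_\mot{u}(\mot{s}) &\\
\mot{v} & \lacroix & \delta_{\rho_\mot{u}(\mot{s})}(\mot{v})\\
\end{array}\]
so that \(\delta_\mot{s}(\mot{uv}) = \delta_\mot{s}(\mot{u})\,\delta_{\rho_\mot{u}(\mot{s})}(\mot{v})\); in particular the length-\(|\mot{v}|\) suffix of \(\delta_\mot{s}(\mot{uv})\) is exactly \(\delta_{\rho_\mot{u}(\mot{s})}(\mot{v})\). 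This is nothing more than the recursive definition of the action of a power, read off the nesting of the compositions.

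Writing \(m=|\mot{u}|\) and \(n=|\mot{v}|\), I would then use that \(\aut{A}\), being bireversible, is reversible, so the component \(\cc{q^{m+n}}\) is strongly connected. From \(\mot{uv}\in\cc{q^{m+n}}\) I can therefore pick an input word \(\mot{s}\) with \(\delta_\mot{s}(q^{m+n})=\mot{uv}\). Splitting \(q^{m+n}=q^mq^n\) and applying the factorization, the length-\(n\) suffix of \(\mot{uv}\) equals \(\delta_{\rho_{q^m}(\mot{s})}(q^n)\), which is \(\mot{v}\). Hence, with \(\mot{t}=\rho_{q^m}(\mot{s})\), we get \(\mot{v}=\delta_\mot{t}(q^n)\), so \(\mot{v}\) is reachable from \(q^n\) in \(\aut{A}^n\) and thus lies in \(\cc{q^n}\). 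This is the first assertion.

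For the inclusion I would argue identically one length up. Take \(p\in\follow{\mot{uv}}{\cc{q^{m+n+1}}}\); by a length count this says exactly that \(\mot{uv}p\) is a state of \(\cc{q^{m+n+1}}\). Using strong connectivity again, choose \(\mot{s}\) with \(\delta_\mot{s}(q^{m+n+1})=\mot{uv}p\) and split \(q^{m+n+1}=q^mq^{n+1}\). The factorization with suffix length \(n+1\) shows that the length-\((n+1)\) suffix of \(\mot{uv}p\), namely \(\mot{v}p\), equals \(\delta_{\rho_{q^m}(\mot{s})}(q^{n+1})\), so \(\mot{v}p\) is reachable from \(q^{n+1}\) and lies in \(\cc{q^{n+1}}\). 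Therefore \(p\in\follow{\mot{v}}{\cc{q^{n+1}}}\), which is the claimed inclusion.

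The whole argument is concentrated in the factorization of the first paragraph together with the strong connectivity coming from reversibility; once these are in hand both claims follow simply by reading off a suffix. The only delicate point is the bookkeeping of lengths, ensuring that ``prefix of some state'' in the definition of \(\follow{}{}\) collapses to ``is a state'' for words whose length already matches the power, and that the split of \(q^{m+n+1}\) as \(q^mq^{n+1}\) lines the suffix up with \(\mot{v}p\). I do not anticipate any genuine obstacle beyond stating the factorization carefully.
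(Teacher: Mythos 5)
Your proof is correct and follows essentially the same route as the paper: both rest on the factorization of the power-automaton transition through the concatenation \(\mot{u}\cdot\mot{v}\) (the paper's cross-diagram) together with strong connectivity of components of a reversible automaton, the only cosmetic difference being that you transport \(q^{|\mot{uv}|+1}\) onto \(\mot{uv}p\) while the paper transports \(\mot{uv}p\) onto \(q^{|\mot{uv}|+1}\).
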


\begin{proof}
Take \(\aut{A}=(Q,\Sigma,\delta,\rho)\) and let \(p\in
\follow{\mot{uv}}{\cc{q^{|\mot{uv}|+1}}}\).
  
Since \(\aut{A}\) is reversible, there exist a
word~\(\mot{s}\in\Sigma^*\) such
that~\(\delta_{\mot{s}}(\mot{uv}p)=q^{|\mot{uv}|+1}\):
\[\begin{array}{ccc}
& \mot{s} &\\
\mot{u} & \lacroix & q^{|\mot{u}|}\\
& \mot{s'} &\\
\mot{v} & \lacroix & q^{|\mot{v}|}\\
& \mot{s''} &\\
p & \lacroix & q\\
\end{array}\]
Hence \(\delta_{\mot{s'}}(\mot{v})=q^{\mot{v}}\) and so
\(\mot{v}\in\cc{q^{|\mot{v}|}}\), and
\(\delta_{\mot{s'}}(\mot{v}p)=q^{|\mot{v}|+1}\), which means that
\(p\in\follow{\mot{v}}{\cc{q^{|\mot{v}|+1}}}\).
\end{proof}

\begin{lemma}\label{lem-precede}
  Let \(\aut{A}\) be a bireversible Mealy automaton with state
set~\(Q\), \(q\in Q\) a state of~\(\aut{A}\), and \(\mot{u}\) and
\(\mot{v}\) be two elements of~\(Q^+\). If 
  \(\mot{uv}\) is a state of \(\cc{q^{|\mot{uv}|}}\), then \(\mot{u}\)
  is a state of \(\cc{q^{|\mot{u}|}}\) and
  \[\precede{\mot{uv}}{\cc{q^{|\mot{uv}|+1}}}\subseteq \precede{\mot{u}}{\cc{q^{|\mot{u}|+1}}}\:.\]
\end{lemma}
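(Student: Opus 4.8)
Lemma~\ref{lem-precede} is the exact mirror image of Lemma~\ref{lem-follow}, with prefixes replaced by suffixes and the roles of the forward action and the backtrack action interchanged. The plan is therefore to follow the proof of Lemma~\ref{lem-follow} symmetrically, but using coreversibility (hence the backtrack action) where Lemma~\ref{lem-follow} used reversibility (the forward action). The key structural fact I will rely on is the one already exploited in the proof of the second lemma of this section: in a bireversible automaton, both the actions and the backtrack actions induced by states are permutations, and the components of any power are strongly connected.

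Here is how I would carry it out. Fix \(\aut{A}=(Q,\Sigma,\delta,\rho)\) and take \(p\in\precede{\mot{uv}}{\cc{q^{|\mot{uv}|+1}}}\), so that \(p\mot{uv}\) is a suffix of some state of \(\cc{q^{|\mot{uv}|+1}}\). Since \(\aut{C}=\cc{q^{|\mot{uv}|+1}}\) is a strongly connected component, there is a word carrying the relevant state to \(q^{|\mot{uv}|+1}\); the subtlety, exactly as in the second lemma, is that to act on a \emph{suffix} in a way that matches the backtrack structure I should choose the connecting word on the output side. Concretely I would pick \(\mot{t}\in\Sigma^*\) whose backtrack action sends \(p\mot{uv}\) to \(q^{|\mot{uv}|+1}\), and then read off the induced columnwise picture

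\[\begin{array}{ccc}
p & \lacroix & q\\
& \mot{t''} &\\
\mot{u} & \lacroix & q^{|\mot{u}|}\\
& \mot{t'} &\\
\mot{v} & \lacroix & q^{|\mot{v}|}\\
& \mot{t} &\\
\end{array}\]

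**Extracting the conclusion.** From the top part of this diagram the backtrack action labelled \(\mot{t''}\) sends \(p\mot{u}\) to \(q^{|\mot{u}|+1}\); in particular its restriction sends \(\mot{u}\) to \(q^{|\mot{u}|}\), so \(\mot{u}\in\cc{q^{|\mot{u}|}}\), and \(p\mot{u}\) lies in \(\cc{q^{|\mot{u}|+1}}\), i.e.\ \(p\in\precede{\mot{u}}{\cc{q^{|\mot{u}|+1}}}\). This yields the desired inclusion \(\precede{\mot{uv}}{\cc{q^{|\mot{uv}|+1}}}\subseteq\precede{\mot{u}}{\cc{q^{|\mot{u}|+1}}}\).

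**Main obstacle.** The only delicate point is the same one flagged by the author in the preamble to this section: unlike Lemma~\ref{lem-follow}, which needs only reversibility, the suffix version genuinely requires coreversibility so that the backtrack application is well defined and is a permutation. I therefore expect the crux to be choosing the connecting word on the correct (output) side and invoking bireversibility rather than mere reversibility; once the diagram above is set up correctly, truncating it to its top rows gives the result immediately, and no further computation is needed.
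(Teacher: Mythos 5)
Your proof is correct, and it is essentially the argument the paper intends: Lemma~\ref{lem-precede} is one of the statements the author deliberately leaves unproved as the suffix-side mirror of Lemma~\ref{lem-follow}, and your column diagram together with strong connectivity of \(\cc{q^{|\mot{uv}|+1}}\) gives exactly the expected proof. One conceptual correction, though: your ``main obstacle'' is not actually an obstacle here. Since the whole word \(p\mot{uv}\) is a \emph{state} of \(\cc{q^{|\mot{uv}|+1}}\) (its length matches), strong connectivity already provides a word \(\mot{s}\in\Sigma^*\) with \(\delta_{\mot{s}}(p\mot{uv})=q^{|\mot{uv}|+1}\), and because \(p\mot{u}\) is a \emph{prefix} of \(p\mot{uv}\) the forward action restricts to it directly: \(\delta_{\mot{s}}(p\mot{u})=q^{|\mot{u}|+1}\) and \(\delta_{\rho_p(\mot{s})}(\mot{u})=q^{|\mot{u}|}\), which is the whole lemma using reversibility alone. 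Coreversibility and the backtrack application are genuinely needed only in situations like the second lemma of Section~\ref{sec-powers}, where one must extend an action from a \emph{proper suffix} \(\mot{u}\) to \(p\mot{u}\) without knowing the containing state; your backtrack phrasing here is harmless (it describes the same path read off its output label) but superfluous. Finally, a small slip: the word witnessing that \(p\mot{u}\) reaches \(q^{|\mot{u}|+1}\) on the output side is \(\mot{t'}\) (the label under the \(\mot{u}\) row), not \(\mot{t''}\), which is only the output of \(p\)'s row.
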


As \(\follow{q^n}{\cc{q^{n+1}}}\subseteq
\follow{q^{n+1}}{\cc{q^{n+2}}}\) by Lemma~\ref{lem-follow}, it is
straightforward to see that the sequence of ratios associated to~\(q\) decreases:
\[\forall n>0,\, \frac{\#\cc{q^{n+2}}}{\#\cc{q^{n+1}}}\leq
\frac{\#\cc{q^{n+1}}}{\#\cc{q^n}}\:,\]
and hence is ultimately constant. We say that \(q\) \emph{has a constant
  ratio\/} if this sequence is in fact constant, and then the unique
value of the sequence of ratios associated to~\(q\) is called the
\emph{ratio of~\(q\)\/}.

It has been proven in~\cite{klimann_ps:3state} that \(q\) induces an
action of infinite order if and only if the sizes of the components
\((\cc{q^n})_{n>0}\) are unbounded, \ie the limit of the sequence of
ratios associated to~\(q\) is greater than~1.

We study now some properties on followers and predecessors in the
components~\(\cc{q^n}\), when~\(q\) has a constant ratio.

The next lemma is an improvement of Lemma~\ref{lem-follow}.

\begin{lemma}\label{lem-followletter}
  Let \(\aut{A}\) be a bireversible Mealy automaton, and \(q\)
be a state of~\(\aut{A}\) of constant ratio.
  Let \(\mot{u}\) and \(\mot{v}\) be two elements of~\(Q^+\) such that
  \(\mot{uv}\) is a state of \(\cc{q^{|\mot{uv}|}}\). We have:
  \[\follow{\mot{uv}}{\cc{q^{|\mot{uv}|+1}}}= \follow{\mot{v}}{\cc{q^{|\mot{v}|+1}}}\:.\]
\end{lemma}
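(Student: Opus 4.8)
The plan is to upgrade the inclusion already provided by Lemma~\ref{lem-follow} to an equality by a pure cardinality argument. Since \(\mot{uv}\) is a state of \(\cc{q^{|\mot{uv}|}}\), Lemma~\ref{lem-follow} gives both that \(\mot{v}\) is a state of \(\cc{q^{|\mot{v}|}}\) and the inclusion \(\follow{\mot{uv}}{\cc{q^{|\mot{uv}|+1}}}\subseteq\follow{\mot{v}}{\cc{q^{|\mot{v}|+1}}}\). It therefore suffices to show that these two finite sets have the same cardinality.

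First I would invoke the constant-ratio hypothesis. Recall that the sequence of ratios associated to \(q\) is \((|\follow{q^n}{\cc{q^{n+1}}}|)_{n>0}\), and that by Lemma~\ref{lem-setF} the quantity \(|\follow{\mot{w}}{\cc{q^{n+1}}}|\) depends only on the length of \(\mot{w}\) among the states \(\mot{w}\) of \(\cc{q^n}\); in particular it coincides with the \(n\)-th term of the sequence of ratios for every such \(\mot{w}\). Denoting by \(r\) the (constant) ratio of \(q\), one thus has \(|\follow{\mot{w}}{\cc{q^{n+1}}}|=r\) for every \(n>0\) and every state \(\mot{w}\) of \(\cc{q^n}\).

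Next I would apply this identity twice, to the two words at hand. Taking \(\mot{w}=\mot{uv}\), which is a state of \(\cc{q^{|\mot{uv}|}}\), yields \(|\follow{\mot{uv}}{\cc{q^{|\mot{uv}|+1}}}|=r\); taking \(\mot{w}=\mot{v}\), which is a state of \(\cc{q^{|\mot{v}|}}\) as noted above, yields \(|\follow{\mot{v}}{\cc{q^{|\mot{v}|+1}}}|=r\). The two sets then have equal finite cardinality, and since the first is contained in the second, they must coincide, which is the claimed equality.

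There is no real obstacle here beyond one point that must be stated carefully: it is precisely the constancy of the ratio that forces the two relevant terms of the sequence of ratios—the \(|\mot{uv}|\)-th and the \(|\mot{v}|\)-th—to be equal. Without this hypothesis the sequence is only known to be decreasing, so the cardinality could strictly drop when passing from \(\mot{v}\) to the longer word \(\mot{uv}\), and only the inclusion of Lemma~\ref{lem-follow} would survive. Making explicit that the constant-ratio assumption is exactly what closes the gap is the heart of the argument.
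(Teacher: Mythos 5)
Your proposal is correct and is essentially the paper's own proof: the inclusion comes from Lemma~\ref{lem-follow}, and the constant-ratio hypothesis (via Lemma~\ref{lem-setF}) forces both follower sets to have the same cardinality, namely the ratio of \(q\), hence equality. You merely spell out the cardinality bookkeeping that the paper leaves implicit.
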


\begin{proof}
The left part is a subset of the right one by Lemma~\ref{lem-follow}
and both sets have the same cardinality, which is the ratio of~\(q\),
by hypothesis.
\end{proof}

In particular, by taking the word \(\mot{v}\) of length~\(1\) in the
previous lemma, we can see that the set of followers of a word~\(\mot{w}\)
in~\(\cc{q^n}\) only depends on its last letter~\(\lettre{-1}{\mot{w}}\).

\begin{lemma}\label{lem-precedeletter}
Let \(\aut{A}\) be a bireversible Mealy automaton of state set~\(Q\),
\(q\) be a state of~\(\aut{A}\) of constant ratio,
and \(n>1\) be an integer. If~\(\mot{u}\in Q^+\) is a suffix of some
state of \(\cc{q^n}\), then the set~\(\precede{\mot{u}}{\cc{q^n}}\) only depends
on \(\lettre{0}{\mot{u}}\), the first letter of~\(\mot{u}\).
\end{lemma}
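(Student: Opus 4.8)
The plan is to mirror the derivation of Lemma~\ref{lem-followletter} and its following remark, but on the left: I would first produce a predecessor analogue of Lemma~\ref{lem-followletter}, then show that the predecessor set does not depend on the ambient power, and finally collapse everything to the first letter. The only genuinely new ingredient, and the step I expect to be the main obstacle, is to prove that the number of predecessors of a state plays the same role as the ratio of~\(q\); contrary to the number of followers, this is \emph{not} built into the definition of the ratio and must be established by hand.

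First I would prove the following lifting fact: for every \(k>0\), each state \(\mot{u}\) of \(\cc{q^k}\) occurs as the length-\(k\) suffix of some state of \(\cc{q^{k+1}}\). Indeed, \(\cc{q^k}\) is strongly connected, so some \(\mot{s}\in\Sigma^*\) satisfies \(\delta_{\mot{s}}(q^k)=\mot{u}\); since \(\aut{A}\) is bireversible, \(\rho_q\) is a permutation of \(\Sigma\), hence of \(\Sigma^*\), so one may read at \(q^{k+1}=q\,q^k\) the word \(\mot{t}=\rho_q^{-1}(\mot{s})\), and then \(\delta_{\mot{t}}(q^{k+1})\) is a state of \(\cc{q^{k+1}}\) whose length-\(k\) suffix is \(\delta_{\rho_q(\mot{t})}(q^k)=\mot{u}\). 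Together with the fact that a suffix of a state is a state (Lemma~\ref{lem-follow}), this shows that assigning to a state of \(\cc{q^{k+1}}\) its length-\(k\) suffix is a map \emph{onto} \(\cc{q^k}\), the states lying above a fixed \(\mot{u}\) being exactly \(\{p\mot{u}\mid p\in\precede{\mot{u}}{\cc{q^{k+1}}}\}\). By the predecessor version of Lemma~\ref{lem-setF} their number is the same for all such \(\mot{u}\), so partitioning the states of \(\cc{q^{k+1}}\) according to this suffix yields \(\#\cc{q^{k+1}}=\#\cc{q^k}\cdot|\precede{\mot{u}}{\cc{q^{k+1}}}|\). As \(q\) has constant ratio, \(|\precede{\mot{u}}{\cc{q^{k+1}}}|\) therefore equals the ratio of~\(q\) for every state \(\mot{u}\) of \(\cc{q^k}\).

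With this in hand, the predecessor analogue of Lemma~\ref{lem-followletter} is immediate: if \(\mot{uv}\) is a state of \(\cc{q^{|\mot{uv}|}}\), then Lemma~\ref{lem-precede} gives \(\precede{\mot{uv}}{\cc{q^{|\mot{uv}|+1}}}\subseteq\precede{\mot{u}}{\cc{q^{|\mot{u}|+1}}}\) (and that \(\mot{u}\) is a state of \(\cc{q^{|\mot{u}|}}\)), while both sets have cardinality the ratio of~\(q\) by the previous paragraph, whence equality. I would then record that the predecessor set is insensitive to the ambient power: for \(\mot{u}\) of length \(m<n\) a suffix of a state of \(\cc{q^n}\) one has \(\precede{\mot{u}}{\cc{q^n}}=\precede{\mot{u}}{\cc{q^{m+1}}}\). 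The inclusion \(\subseteq\) is Lemma~\ref{lem-follow} (a length-\((m+1)\) suffix of a state of \(\cc{q^n}\) is a state of \(\cc{q^{m+1}}\)); the inclusion \(\supseteq\) follows by iterating the lifting fact, which shows that every state of \(\cc{q^{m+1}}\) is a suffix of a state of \(\cc{q^n}\).

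Finally I would conclude. Setting \(a=\lettre{0}{\mot{u}}\) and writing \(\mot{u}=a\mot{w}\) (the case \(m=n\) being trivial, with empty predecessor set), the chain
\[\precede{\mot{u}}{\cc{q^n}}=\precede{\mot{u}}{\cc{q^{m+1}}}=\precede{a}{\cc{q^2}}\]
holds, where the first equality is the power-insensitivity just proved and the second is the predecessor analogue of Lemma~\ref{lem-followletter} applied with the length-one prefix~\(a\). Since \(\precede{a}{\cc{q^2}}\) depends only on \(a=\lettre{0}{\mot{u}}\), this is exactly the claim. The delicate point throughout remains the first paragraph: identifying this left ratio with the ratio of~\(q\), for which bireversibility is used precisely to drive the suffix of \(q^{k+1}\) onto an arbitrary state of \(\cc{q^k}\).
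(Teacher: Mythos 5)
Your proof is correct, and it is worth noting that the paper gives \emph{no} proof of this lemma at all: it is left implicit as the mirror image of Lemma~\ref{lem-followletter}, whose one-line proof reads ``the left part is a subset of the right one and both sets have the same cardinality, which is the ratio of~\(q\).'' You correctly identify that this one-liner does not transpose for free: for followers, ``the common cardinality is the ratio'' is essentially the definition of the ratio, whereas for predecessors it must be proved. Your first paragraph supplies exactly the missing ingredient --- surjectivity of the length-\(k\) suffix map from \(\cc{q^{k+1}}\) onto \(\cc{q^k}\) (via invertibility of \(\rho_q\), which is legitimate here since the automaton generates a group), followed by the count \(\#\cc{q^{k+1}}=\#\cc{q^k}\cdot|\precede{\mot{u}}{\cc{q^{k+1}}}|\). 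In the paper this counting identity surfaces only inside the proof of Lemma~\ref{lem-samecardFP}, which itself \emph{cites} Lemma~\ref{lem-precedeletter}; by front-loading it you resolve what would otherwise look like a circular dependency, so your reorganization is not merely cosmetic. The remaining steps (equality from inclusion plus equal cardinalities via Lemma~\ref{lem-precede}, insensitivity to the ambient power, reduction to \(\precede{\lettre{0}{\mot{u}}}{\cc{q^2}}\)) faithfully mirror the paper's treatment of followers, including the power-insensitivity remark buried in the proof of Lemma~\ref{lem-samecardFP}. Your handling of the degenerate case \(|\mot{u}|=n\) (empty predecessor set) is also more careful than the lemma's literal statement, which is really meant for proper suffixes.
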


The next lemma links up the sets of followers and of predecessors in
\(\cc{q^n}\) when \(q\) has a constant ratio.

\begin{lemma}\label{lem-samecardFP}
Let \(\aut{A}\) be a bireversible Mealy automaton of state set~\(Q\),
\(q\) be a state of~\(\aut{A}\) of constant ratio,
and \(n>1\) be an integer. The sets of followers and of predecessors
in~\(\cc{q^n}\) have the same cardinality which is the ratio of~\(q\).
\end{lemma}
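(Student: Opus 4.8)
The plan is to prove that both the followers and predecessors in $\cc{q^n}$ have cardinality equal to the ratio of $q$, and then combine these. First I would handle the followers. By Lemma~\ref{lem-setF}, for a bireversible automaton the cardinality of $\follow{\mot{u}}{\cc{q^n}}$ depends only on $|\mot{u}|$ when $\mot{u}$ is a proper prefix of some state of $\cc{q^n}$. By Lemma~\ref{lem-followletter} (in its stated form with $\mot{v}$ taken of length $1$), the set of followers of any word in $\cc{q^n}$ depends only on its last letter, and equals $\follow{\mot{v}}{\cc{q^{|\mot{v}|+1}}}$ for a length-one suffix $\mot{v}$; in particular its cardinality equals $|\follow{q}{\cc{q^2}}|$, which is precisely the (constant) ratio of $q$. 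So every nonempty proper prefix of a state of $\cc{q^n}$ has exactly the ratio of $q$ as the number of its followers.

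Next I would treat the predecessors dually. By the second lemma of the section (the analogue of Lemma~\ref{lem-setF} for predecessors), the cardinality of $\precede{\mot{u}}{\cc{q^n}}$ depends only on $|\mot{u}|$ when $\mot{u}$ is a proper suffix of some state of $\cc{q^n}$. By Lemma~\ref{lem-precedeletter}, this set depends only on the first letter $\lettre{0}{\mot{u}}$, so its cardinality is that of $\precede{\mot{v}}{\cc{q^n}}$ for a single letter $\mot{v}$, and by the length-only dependence this common value is independent of $n$ as well. It remains to identify this cardinality with the ratio of $q$. For this I would count the states of $\cc{q^n}$ in two ways: writing each state as a nonempty prefix of length $n-1$ followed by its last letter, the total number of states equals the number of length-$(n-1)$ prefixes multiplied by the common number of followers, i.e.\ $\#\cc{q^n} = \#\cc{q^{n-1}}\cdot(\text{ratio of }q)$; dually, writing each state as its first letter followed by a length-$(n-1)$ suffix, we get $\#\cc{q^n} = \#\cc{q^{n-1}}\cdot p$, where $p$ is the common number of predecessors. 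Comparing the two expressions forces $p$ to equal the ratio of $q$.

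The main obstacle I expect is the bookkeeping in this double-counting: one must verify that every state of $\cc{q^{n-1}}$ is realized as a proper prefix (resp.\ suffix) of some state of $\cc{q^n}$, so that the factorizations above are genuinely exhaustive and the ratio of $q$ is exactly the multiplicity. The forward inclusions in Lemmas~\ref{lem-follow} and~\ref{lem-precede} (together with the fact that $q$ has constant ratio, making them equalities via Lemma~\ref{lem-followletter}) guarantee that the projection of states of $\cc{q^n}$ onto their length-$(n-1)$ prefixes (resp.\ suffixes) lands surjectively in $\cc{q^{n-1}}$ with constant fiber size. Once this surjectivity-with-constant-fiber is in place, the equality of the follower and predecessor cardinalities with the ratio of $q$ follows immediately from the two counting identities, and the proof concludes.
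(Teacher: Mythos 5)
Your proof is correct and follows essentially the same route as the paper: reduce to a single representative via the ``depends only on the last/first letter'' lemmas, then double-count the states of a power component once by length-$(n-1)$ prefixes and once by length-$(n-1)$ suffixes to force the predecessor count to equal the ratio. The paper performs this double count at $n=2$ (writing $|\cc{q^2}| = |\cc{q}|\times|\follow{q}{\cc{q^2}}| = |\precede{q}{\cc{q^2}}|\times |\cc{q}|$) after observing that follower and predecessor sets are stable across powers, but this is only a cosmetic difference from your version at general~$n$.
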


\begin{proof}
By Lemmas~\ref{lem-followletter} and~\ref{lem-precedeletter}, we only
have to prove
that \[|\follow{q}{\cc{q^n}}|=|\precede{q}{\cc{q^n}}|\:.\]

Even simpler: notice that if \(\mot{u}\) is a prefix of some state
of~\(\cc{q^n}\), then it is also a prefix of some state
in~\(\cc{q^{n+k}}\) for any \(k>0\), and it has the same followers in
both. Of course, an equivalent property holds for the sets of
predecessors.

So it  is sufficient to
  prove that \[|\follow{q}{\cc{q^2}}|=|\precede{q}{\cc{q^2}}|\:,\]
  which is true because
  \[|\cc{q^2}| = |\cc{q}|\times|\follow{q}{\cc{q^2}}| =
  |\precede{q}{\cc{q^2}}|\times |\cc{q}|\:.\]
\end{proof}

\section{Several equivalences on words}\label{sec-equiv}
\subsection{Minimization and Nerode classes}\label{sec-min}
Let $\aut{A}=(Q,\Sigma,\delta,\rho)$ be a Mealy automaton.

The \emph{Nerode equivalence \(\equiv\) on \(Q\)\/} is the limit of the
sequence of increasingly finer equivalences~$(\equiv_k)$ recursively
defined by:
\begin{align*}
\forall p,q\in Q,\qquad\qquad p\equiv_0 q & \ \Longleftrightarrow
\ \rho_p=\rho_q\:,\\
\forall k\geqslant 0,\ p\equiv_{k+1} q &
\ \Longleftrightarrow\  \bigl(p\equiv_k q\quad \wedge\quad\forall
i\in\Sigma,\ \delta_i(p)\equiv_k\delta_i(q)\bigr)\:.
\end{align*}

Since the set $Q$ is finite, this sequence is ultimately constant.
For every element~$q$ in~$Q$, we denote by~$[q]$  the
class of~$q$ \wrt the Nerode equivalence, called 
the \emph{Nerode class\/}  of
\(q\). Extending to the \(n\)-th power of \(\aut{A}\), we denote 
by \([\mot{u}]\) the Nerode class in \(Q^n\) of
\(\mot{u}\in Q^n\).

Two states of a Mealy automaton belong to the
same Nerode class if and only if they represent
the same element in the generated (semi)group, \ie if and only
if they induce the same action on \(\Sigma^*\). Two
words on \(Q\) of the same length~\(n\) are \emph{equivalent\/} if they
belong to the same Nerode class in \(Q^n\). By extension, any two words on
\(Q\) are \emph{equivalent\/} if they induce the same action.

\medskip

The \emph{minimization\/} of $\aut{A}$ is the Mealy automaton
\(\mz(\aut{A})=(Q/\mathord{\equiv},\Sigma,\tilde{\delta},\tilde{\rho})\),
where for every $(q,i)$ in $Q\times \Sigma$,
$\tilde{\delta}_i([q])=[\delta_i(q)]$ and
$\tilde{\rho}_{[q]}=\rho_q$.
This definition is consistent with the standard minimization of
``deterministic finite automata'' where instead of
considering the mappings $(\rho_q\colon\Sigma\to\Sigma)_q$, the computation
is initiated by the separation between terminal and non-terminal
states.

A Mealy automaton is \emph{minimal\/} if it has the same size as its
minimization.

Two states of two different connected reversible minimal Mealy automata with the
same alphabet induce the same action if and only if the automata are
isomorphic and both states are in correspondance by this
isomorphism. As a direct consequence, if two connected reversible
minimal Mealy automata have different sizes, then any two states of each
of them cannot be equivalent.

As we have seen in Section~\ref{sec-powers}, a state~\(q\) of an
invertible-reversible Mealy 
automaton induces an action of infinite order if and only if the sizes of the
\((\cc{q^n})_{n>0}\) are unbounded. The proof
of~\cite{klimann_ps:3state} can be easily adapted to see that \(q\)
induces an action of infinite order if and only if the sizes of the
\((\mz(\cc{q^n}))_{n>0}\) are unbounded, but you can see it by a direct
argument: if the sizes are bounded, there is an infinite set
\(I\subseteq\N\) such that all the element \((\mz(\cc{q^n}))_{n\in
  I}\) are isomorphic, and in this sequence there exist at least two different
integer \(i\neq j\) such that \(q^i\) and \(q^j\) are represented by
the same state in the minimal automata, and so they induce the same
action; if the sizes are unbounded, the sequence
\((\#\mz(\cc{q^n}))_{n>0}\) has infinitely 
many values, and each value corresponds to a different action for
the corresponding power of \(q\).

Note that the Nerode classes of a connected reversible Mealy automaton
have the same cardinality. The size of the minimization automaton in
this case is the ratio between its size and the cardinality of the
Nerode classes.

\begin{lemma}\label{lem-lastletterNerode}
  Let \(\aut{A}\) be a connected bireversible Mealy automaton,
  \(N\) a Nerode class of a connected component of one of its powers,
  and \(p\) and~\(q\) two elements of~\(\lettre{-1}{N}\). There are as
  many elements of \(N\) with last letter~\(p\) as with last letter~\(q\).
\end{lemma}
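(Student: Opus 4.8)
The plan is to exhibit, for each pair \(p,q\in\lettre{-1}{N}\), a transition map \(\delta_{\mot{s}}\) that stabilises \(N\) and carries the words of \(N\) ending in \(p\) injectively to words ending in \(q\); by symmetry this yields the equality of the two counts. Write \(g\) for the common action induced by the elements of \(N\), let \(\aut{C}\) be the connected component containing \(N\), and for \(p\in\lettre{-1}{N}\) set \(N_p=\{\mot{w}\in N\mid\lettre{-1}{\mot{w}}=p\}\), so that the claim is \(|N_p|=|N_q|\).

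The first, and key, step is a \emph{block} property: for every input word \(\mot{s}\), the map \(\delta_{\mot{s}}\) sends the whole of \(N_p\) into a single fibre of the last-letter map. Indeed, write a word of \(N_p\) as \(\mot{w'}p\); since it induces \(g\) and \(\rho_{\mot{w'}p}=\rho_p\circ\rho_{\mot{w'}}\), and since \(\rho_p\) is a permutation (the automaton is bireversible, hence invertible), the prefix action \(\rho_{\mot{w'}}=\rho_p^{-1}\circ g\) is the \emph{same} for every word of \(N_p\). As the last letter of \(\delta_{\mot{s}}(\mot{w'}p)\) is \(\delta_{\rho_{\mot{w'}}(\mot{s})}(p)\) and \(\rho_{\mot{w'}}(\mot{s})\) therefore does not depend on the chosen word, all words of \(N_p\) are mapped by \(\delta_{\mot{s}}\) to words sharing one and the same last letter.

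Next I would use that the Nerode equivalence is a congruence for the maps \(\delta_i\) — this is built into its recursive definition — so \(\delta_{\mot{s}}\) maps the Nerode class \(N\) inside a single Nerode class. Now fix \(p,q\in\lettre{-1}{N}\) and pick \(\mot{w}\in N_p\) and \(\mot{w}^*\in N_q\); both are states of \(\aut{C}\), which is strongly connected because \(\aut{A}\) is reversible, so some \(\mot{s}\) satisfies \(\delta_{\mot{s}}(\mot{w})=\mot{w}^*\). Since \(\delta_{\mot{s}}(\mot{w})=\mot{w}^*\in N\) and \(\delta_{\mot{s}}(N)\) lies in a single Nerode class, that class must be \(N\); being an injective self-map of the finite set \(N\), the restriction of \(\delta_{\mot{s}}\) is a bijection of \(N\). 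By the block property this bijection sends \(N_p\) into the fibre of the last letter of \(\mot{w}^*\), namely \(N_q\), so injectivity gives \(|N_p|\le|N_q|\). Exchanging the roles of \(p\) and \(q\) yields the reverse inequality, whence \(|N_p|=|N_q|\).

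I expect the main obstacle to be the block property itself: a priori \(\delta_{\mot{s}}\) could scatter the words of \(N_p\) among several last letters, and it is precisely the double constraint — all words of \(N\) induce the same action \(g\) \emph{and} share the last letter \(p\) — that forces the prefix action \(\rho_{\mot{w'}}\), hence the resulting last letter, to be constant. Once this is secured, only reversibility (for the strong connectivity of \(\aut{C}\) and for \(\delta_{\mot{s}}\) being a permutation) and the congruence property intervene; coreversibility is in fact not needed, so the statement already holds for reversible invertible automata.
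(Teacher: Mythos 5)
Your proof is correct, and it is worth comparing it with the paper's, which is only a one-line reference to the (itself omitted) proof of Lemma~\ref{lem-precedeletter}, \ie to the backtrack-action machinery used for the ``predecessor'' lemmas, with words of length \(n-1\) playing the role of the predecessors. You instead work entirely with the forward action: strong connectivity gives a word \(\mot{s}\) with \(\delta_{\mot{s}}(\mot{w})=\mot{w}^*\), the congruence property of the Nerode equivalence keeps \(\delta_{\mot{s}}(N)\) inside \(N\) (inside the component, hence inside the restricted class), and your ``block property'' --- that every \(\mot{w'}\) with \(\mot{w'}p\in N\) induces the same action \(\rho_{\mot{w'}}=\rho_p^{-1}\circ g\), so that the last letter \(\delta_{\rho_{\mot{w'}}(\mot{s})}(p)\) of the image is constant on the fibre \(N_p\) --- replaces the coreversibility/backtrack argument. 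This is the right rigidity to exploit here: it is specific to Nerode classes (all elements share the action \(g\)) and is not available in the component-level Lemmas~\ref{lem-precede} and~\ref{lem-precedeletter}, which is presumably why the paper routes through backtrack actions. What your version buys is a self-contained proof and a slightly stronger statement: as you note, only invertibility (for \(\rho_p^{-1}\)) and reversibility (for strong connectivity and the injectivity of \(\delta_{\mot{s}}\)) are used, so the lemma already holds for invertible-reversible automata. The only cosmetic imprecision is the phrase ``that class must be \(N\)'': \(\delta_{\mot{s}}(N)\) lies in a single \emph{full} Nerode class of \(Q^n\), and one should add that it also stays in the connected component, so that its intersection with the component is the restricted class \(N\); this is automatic and does not affect the argument.
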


\begin{proof}
The proof of this lemma is quite similar to the proof of
Lemma~\ref{lem-precedeletter} considering not words of
length~\(1\) as predecessors, but words on length~\(n-1\), where \(n\)
is the length of the states in~\(N\).
\end{proof}

\subsection{Restricted Nerode classes}
When the considered automaton is not connected, it
can be interesting to consider the restriction of the Nerode class of
an element to its connected component: we denote it by
\(\ccNerode{q}\) et call it the \emph{restricted Nerode class\/}
of~\(q\).

\begin{lemma}\label{lem-samesizerestricted}
The restricted Nerode classes of two elements in the
same connected component of a reversible Mealy automaton have the same
cardinality.
\end{lemma}

Let \(\aut{A}=(Q,\Sigma,\delta,\rho)\) be a bireversible
automaton and \(q\) a state of~\(\aut{A}\) of constant
ratio. As it will be discussed in Section~\ref{sec-main}, the result
of this article is somehow a generalization of a much simpler result proved
in~\cite{Klimann2016}, in the case where all the powers of~\(\aut{A}\)
are connected. The strategy used then is
based on the fact that \(\ccNerode{q^n}q\subseteq \ccNerode{q^{n+1}}\)
when all the powers of \(\aut{A}\) are
connected. However in the more general case we study here, this fact is
false: a priori there is no inclusion 
link between \(\ccNerode{q^n}\) and \(\ccNerode{q^{n+1}}\), because if
\(\mot{u}\) is a state of~\(\ccNerode{q^n}\), then nothing ensures that
\(\mot{u}q\) belongs to~\(\cc{q^{n+1}}\); hence we have
to find a different strategy. For this purpose, we introduce the
\emph{\(q\)-restricted Nerode class\/} of \(q^n\), \ie the
set of states of \(\ccNerode{q^n}\) which 
admit \(q\) as a suffix: \(\Nerendq{n} = \ccNerode{q^n}\cap Q^*q\).

The aim of this section is to study the sequence
\((\Nerendq{n})_{n>0}\).

\begin{lemma}\label{lem-inclusion}
There is an inclusion link in the sequence of \(q\)-restricted Nerode
classes:
\[\forall n>0,\, \Nerendq{n}q\subseteq \Nerendq{n+1}\:.\]
\end{lemma}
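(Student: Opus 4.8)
The plan is to show that if $\mot{w}$ is any state of the component $\cc{q^{n+1}}$ lying in the restricted Nerode class $\Nerendq{n}q$, equivalently any word of the form $\mot{w} = \mot{w'}q$ where $\mot{w'}\in\Nerendq{n}$, then $\mot{w'}q$ actually lives in $\Nerendq{n+1}$, i.e.\ it belongs to $\ccNerode{q^{n+1}}\cap Q^*q$. The suffix condition $Q^*q$ is free, since $\mot{w'}q$ visibly ends in $q$, so the real content is to establish two things: first, that $\mot{w'}q$ is a state of the component $\cc{q^{n+1}}$ (so that it even makes sense to speak of its restricted Nerode class there), and second, that $\mot{w'}q$ is Nerode-equivalent to $q^{n+1}$ within that component.

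First I would address membership in the component. By hypothesis $\mot{w'}\in\Nerendq{n}\subseteq\ccNerode{q^n}$, so $\mot{w'}$ is a state of $\cc{q^n}$ and moreover $\mot{w'}$ ends with $q$, say $\mot{w'}=\mot{x}q$. I want $\mot{w'}q=\mot{x}qq$ to be a state of $\cc{q^{n+1}}$. The natural tool is Lemma~\ref{lem-followletter}: since $q$ has constant ratio and $\mot{w'}$ is a state of $\cc{q^n}$ ending in $q$, the set $\follow{\mot{w'}}{\cc{q^{n+1}}}$ equals $\follow{q}{\cc{q^2}}$ (followers depend only on the last letter). Because $q^{n+1}=q^n q$ is a state of $\cc{q^{n+1}}$ with $q^n$ a state of $\cc{q^n}$, the letter $q$ belongs to $\follow{q^n}{\cc{q^{n+1}}}=\follow{q}{\cc{q^2}}$, hence $q\in\follow{\mot{w'}}{\cc{q^{n+1}}}$ as well; this is exactly the statement that $\mot{w'}q$ is (the prefix of a state, here itself) a state of $\cc{q^{n+1}}$.

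Next I would prove the Nerode equivalence $\mot{w'}q\equiv q^{n+1}$. Here I use that $\mot{w'}\equiv q^n$ as states of $\cc{q^n}$ (this is the meaning of $\mot{w'}\in\ccNerode{q^n}$). I would argue that appending the same letter $q$ on the right preserves Nerode equivalence in the appropriate sense: $\rho_{\mot{w'}q}=\rho_q\circ\rho_{\mot{w'}}=\rho_q\circ\rho_{q^n}=\rho_{q^{n+1}}$, since equivalent words induce the same action and composition on the right by a fixed $\rho_q$ respects this. More carefully, one checks that $\equiv_k$-equivalence of $\mot{w'}$ and $q^n$ propagates to $\equiv_k$-equivalence of $\mot{w'}q$ and $q^{n+1}$ for all $k$, using that the transition functions $\delta_i$ act coordinatewise-compatibly when a common suffix letter is appended; passing to the limit gives full Nerode equivalence, and restricting to the component $\cc{q^{n+1}}$ gives $\mot{w'}q\in\ccNerode{q^{n+1}}$.

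The step I expect to be the main obstacle is the second one, verifying that right-concatenation with $q$ is compatible with the Nerode equivalence at the level of the powers: the Nerode equivalence is defined via the action on $\Sigma^*$ and via the $\delta_i$ on the power automaton, so one must be careful that $\mot{w'}\equiv q^n$ in $Q^n$ genuinely forces $\mot{w'}q\equiv q^{n+1}$ in $Q^{n+1}$ and not merely that they act identically on single letters. The cleanest route is the functional identity $\rho_{\mot{w'}q}=\rho_q\circ\rho_{\mot{w'}}$ together with the characterization that two words are equivalent exactly when they induce the same action on $\Sigma^*$; once this is in place the inclusion $\Nerendq{n}q\subseteq\Nerendq{n+1}$ follows immediately by combining it with the component-membership established via Lemma~\ref{lem-followletter}.
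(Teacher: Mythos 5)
Your proposal is correct and follows essentially the same route as the paper: membership of $\mot{w'}q$ in $\cc{q^{n+1}}$ via Lemma~\ref{lem-followletter} (followers depend only on the last letter, which is $q$, and $q\in\follow{q^n}{\cc{q^{n+1}}}$ since $q^{n+1}$ is a state), then the Nerode equivalence via $\rho_{\mot{w'}q}=\rho_q\circ\rho_{\mot{w'}}=\rho_q\circ\rho_{q^n}=\rho_{q^{n+1}}$. The compatibility worry you raise is already settled by the paper's stated characterization that two words are equivalent exactly when they induce the same action on $\Sigma^*$, which is precisely the functional-identity route you end up taking.
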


\begin{proof}
Let \(\mot{u}\) be an element of \(\Nerendq{n}\): \(\mot{u}\) is a
state of \(\cc{q^n}\) and \(\follow{\mot{u}}{\cc{q^{n+1}}}\) depends
only on \(\lettre{-1}{\mot{u}}=q\) from
Lemma~\ref{lem-followletter},
so in particular \(q\) can follow~\(\mot{u}\) since
\(q\in\follow{q}{\cc{q^{n+1}}}\). Since \(\mot{u}\) and~\(q^n\) induce
the same action, so do \(\mot{u}q\) and~\(q^{n+1}\).
\end{proof}

The next results give more information on the growth of the
\(q\)-restricted Nerode classes of \(q^n\) with respect to~\(n\).

\begin{proposition}\label{prop-boundingrestrNerclasses}
Let \(\aut{A}\) be a bireversible Mealy automaton, and \(q\)
be a state of~\(\aut{A}\) of constant ratio~\(k\). The ratio between
the sizes of~\(\Nerendq{n+1}\) and~\(\Nerendq{n}\) is an integer and:
\[\forall n>0,\,\frac{|\Nerendq{n+1}| }{|\Nerendq{n}|} =
|\lettre{-2}{\Nerendq{n+1}}|\:.\] In particular this ratio cannot be
greater than~\(k\).
\end{proposition}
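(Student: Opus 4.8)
The plan is to pass from length $n+1$ to length $n$ by deleting the final letter $q$, and to sort the resulting words according to their new last letter, which is precisely the second-to-last letter of the original word. Write $N=\Nerendq{n+1}$ and, for each $s\in\lettre{-2}{N}$, let $N_s$ be the set of states of $N$ whose second-to-last letter is $s$, so that $N=\bigsqcup_{s\in\lettre{-2}{N}}N_s$. It then suffices to prove that every block $N_s$ has cardinality $|\Nerendq{n}|$: summing over the $|\lettre{-2}{N}|$ blocks gives the announced equality and displays the ratio as an integer.

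Fix $s\in\lettre{-2}{N}$ and consider the map $\mot{w}sq\mapsto \mot{w}s$ that erases the trailing $q$. The basic observation is that $\mot{w}sq$ and $q^{n+1}$ induce the same action exactly when $\mot{w}s$ and $q^n$ do, since $\rho_{\mot{w}sq}=\rho_q\circ\rho_{\mot{w}s}$ and $\rho_{q^{n+1}}=\rho_q\circ\rho_{q^n}$, and $\rho_q$ is invertible because $\aut{A}$ is bireversible. Combined with Lemma~\ref{lem-precede}, which places $\mot{w}s$ in $\cc{q^n}$ as soon as $\mot{w}sq$ lies in $\cc{q^{n+1}}$, this shows that erasing the trailing $q$ sends $N_s$ injectively into $\ccNerode{q^n}$, landing among the states of $\ccNerode{q^n}$ that end with $s$.

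The heart of the argument is to check that this map reaches all such states. Let $\mot{w}s\in\ccNerode{q^n}$ end with $s$; by the equivalence above $\mot{w}sq$ already acts as $q^{n+1}$, so it only remains to see that $\mot{w}sq$ sits in $\cc{q^{n+1}}$, \ie that $q\in\follow{\mot{w}s}{\cc{q^{n+1}}}$. By Lemma~\ref{lem-followletter} this set of followers depends only on the last letter $s$; as $s\in\lettre{-2}{N}$ there is a witness $\mot{w'}sq\in N$, for which $q$ does follow $\mot{w'}s$, so $q$ follows every state of $\cc{q^n}$ ending with $s$, and in particular $\mot{w}s$. Hence $N_s$ is in bijection with the set of states of $\ccNerode{q^n}$ ending with $s$. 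Applying Lemma~\ref{lem-lastletterNerode} to the component $\cc{q^n}$ (a connected bireversible automaton in its own right), each last-letter block of $\ccNerode{q^n}$ that occurs has the cardinality of the block ending with $q$, which is $\Nerendq{n}$ by definition; therefore $|N_s|=|\Nerendq{n}|$.

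Finally, to bound the ratio by $k$, note that each $s\in\lettre{-2}{N}$ appears as the second-to-last letter of a state $\mot{w}sq$ of $\cc{q^{n+1}}$, so $sq$ is a suffix of a state and $s\in\precede{q}{\cc{q^{n+1}}}$; thus $\lettre{-2}{N}\subseteq\precede{q}{\cc{q^{n+1}}}$, whose cardinality is the ratio $k$ by Lemma~\ref{lem-samecardFP}, giving $|\lettre{-2}{N}|\leq k$. The delicate point is the surjectivity argument, for it is there that bireversibility is genuinely used---through the invertibility of $\rho_q$ and the dependence of followers on the last letter alone---to promote the mere appearance of a second-to-last letter $s$ into the capture of the entire $s$-block of $\ccNerode{q^n}$.
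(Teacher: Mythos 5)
Your proof is correct and follows essentially the same route as the paper's: both decompose $\Nerendq{n+1}$ according to the penultimate letter, identify each block $\Nerendq{n+1}\cap Q^*sq$ with the last-letter-$s$ block of $\ccNerode{q^n}$ via erasing the trailing $q$ (using Lemma~\ref{lem-followletter} for surjectivity and invertibility of $\rho_q$ for the action equivalence), and then invoke Lemma~\ref{lem-lastletterNerode} for equidistribution and Lemmas~\ref{lem-precedeletter}/\ref{lem-samecardFP} for the bound by $k$. No substantive difference.
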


\begin{proof}
If \(p\in\lettre{-2}{\Nerendq{n+1}}\), then the
sets~\(\Nerendq{n+1}\cap Q^*pq\) and~\(\Nerendq{n+1}\cap Q^*qq\) have
the same cardinality; indeed, an element \(\mot{u}\in Q^*p\) satisfies
\(\mot{u}q\in\Nerendq{n+1}\) if and only if
\(\mot{u}\in\ccNerode{q^n}\) by Lemmas~\ref{lem-inclusion}
and~\ref{lem-followletter} (since by hypothesis~\(p\) can precede~\(q\) in
\(\ccNerode{q^{n+1}}\)). So by Lemma~\ref{lem-lastletterNerode}, 
\(|\ccNerode{q^n}\cap Q^*p|=|\Nerendq{n}|\) and the result follows.

Since \(p\in\lettre{-2}{\Nerendq{n+1}}\) can precede~\(q\)
in~\(\cc{q^{n+1}}\), we obtain the bound \(k=\precede{q}{\cc{q^2}}\)
from Lemmas~\ref{lem-precedeletter} and~\ref{lem-samecardFP}.
\end{proof}

\begin{proposition}\label{prop-stabilizingrestrNerclassesbest}
Let \(\aut{A}\) be a bireversible Mealy automaton, and \(q\)
be a state of~\(\aut{A}\) of constant ratio~\(k\). The sequence
\[\left(\frac{|\Nerendq{n+1}| }{|\Nerendq{n}|} \right)_{n>0}\]
is ultimately increasing to a limit less than or equal to~\(k\).
\end{proposition}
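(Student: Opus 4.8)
The plan is to combine Proposition~\ref{prop-boundingrestrNerclasses} with a monotonicity argument. By that proposition each term \(r_n := |\Nerendq{n+1}|/|\Nerendq{n}| = |\lettre{-2}{\Nerendq{n+1}}|\) is a positive integer bounded above by~\(k\); hence it suffices to prove that the sequence \((r_n)_{n>0}\) is ultimately non-decreasing, since a bounded non-decreasing sequence of integers is eventually constant and its limit is then at most~\(k\).

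First I would repackage \(r_n\) as the cardinality of an explicit set of letters. Set \(P=\precede{q}{\cc{q^2}}\); by Lemmas~\ref{lem-precedeletter} and~\ref{lem-samecardFP} this set does not depend on the power considered, \(|P|=k\), and \(q\in P\). Using that every factor of a state of \(\cc{q^m}\) is itself a state of the appropriate power (a direct consequence of Lemmas~\ref{lem-follow} and~\ref{lem-precede}), together with Lemmas~\ref{lem-followletter} and~\ref{lem-precedeletter}, one checks that a letter \(p\) lies in \(\lettre{-2}{\Nerendq{n+1}}\) exactly when \(pq\) is a state of \(\cc{q^2}\) (\ie \(p\in P\)) and \(p\) is the last letter of some state of \(\ccNerode{q^n}\). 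Writing \(A_n=\lettre{-1}{\ccNerode{q^n}}\) this reads \(r_n=|A_n\cap P|\), so the whole question reduces to showing that the sets \(A_n\) are ultimately nested, \(A_n\subseteq A_{n+1}\), which yields \(A_n\cap P\subseteq A_{n+1}\cap P\) and hence \(r_n\le r_{n+1}\).

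To prove \(A_n\subseteq A_{n+1}\) I would prepend a state equivalent to~\(q\). Given \(p\in A_n\), pick \(\mot{u}\in\ccNerode{q^n}\) with \(\lettre{-1}{\mot{u}}=p\); prepending a state \(c\) with \(c\equiv q\) produces a word \(c\mot{u}\) of length \(n+1\) that is again equivalent to \(q^{n+1}\), since \(\rho_{c\mot{u}}=\rho_{\mot{u}}\circ\rho_c=\rho_{q^n}\circ\rho_q=\rho_{q^{n+1}}\), and still ends in~\(p\); it therefore witnesses \(p\in A_{n+1}\) as soon as it stays in \(\cc{q^{n+1}}\). This last point is the crux: by Lemma~\ref{lem-precedeletter} the predecessors of \(\mot{u}\) in \(\cc{q^{n+1}}\) depend only on \(\lettre{0}{\mot{u}}\) and coincide with \(\precede{\lettre{0}{\mot{u}}}{\cc{q^2}}\), so one needs a representative \(\mot{u}\) whose first letter can be preceded in \(\cc{q^2}\) by some state of the Nerode class of~\(q\); the convenient case is when \(\mot{u}\) can be chosen to begin with~\(q\), for then \(c=q\) works because \(q^2\) is a state of \(\cc{q^2}\).

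The main obstacle is thus the joint control of the first and last letters inside one Nerode class: for each \(p\in A_n\) I must produce a state of \(\ccNerode{q^n}\) that ends in~\(p\) and begins with~\(q\). By Lemma~\ref{lem-lastletterNerode} and its left--right mirror the last letters, \resp the first letters, are uniformly distributed in any Nerode class, and a mirror of the argument of Lemma~\ref{lem-precedeletter} (using suffixes of length \(n-1\) instead of length~\(1\)) should even show that, once the last letter is fixed to~\(p\), the first letters that occur are uniformly spread over the set of first letters compatible with ending in~\(p\). The genuinely delicate point is that this compatible set contains~\(q\) --- equivalently, that the support of the pair (first letter, last letter) over \(\ccNerode{q^n}\) is a full rectangle. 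This I expect to follow from bireversibility together with the constant-ratio hypothesis, which forces the follower and predecessor sets to be as large and as homogeneous as possible, with \(q^n\) serving as an anchor that simultaneously begins and ends with~\(q\); making this rectangularity precise is where the real work lies, the remaining steps being routine bookkeeping with the cardinalities already controlled by Proposition~\ref{prop-boundingrestrNerclasses}.
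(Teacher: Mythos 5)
Your reduction is correct as far as it goes, and it is in fact the same reduction the paper uses: the identity \(r_n=|\lettre{-2}{\Nerendq{n+1}}|\) comes from Proposition~\ref{prop-boundingrestrNerclasses}, your reformulation \(r_n=|A_n\cap P|\) with \(A_n=\lettre{-1}{\ccNerode{q^n}}\) and \(P=\precede{q}{\cc{q^2}}\) is a valid consequence of Lemmas~\ref{lem-followletter} and~\ref{lem-follow} together with invertibility, and ultimate monotonicity of a bounded integer sequence does finish the proof. But there is a genuine gap exactly where you flag it: you never produce, for a given achievable last letter \(p\), a representative of \(\ccNerode{q^n}\) ending in \(p\) whose first letter admits a predecessor equivalent to \(q\). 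The ``full rectangle'' property you hope for --- that every pair in \(\lettre{0}{\ccNerode{q^n}}\times\lettre{-1}{\ccNerode{q^n}}\) is realized by some element of the class --- is not established, and you should be suspicious of it holding for all \(n\): if your nesting \(A_n\subseteq A_{n+1}\) held from the start, the sequence would be non-decreasing from the start, whereas the statement (deliberately) only claims this \emph{ultimately}. The qualifier is a signal that some stabilization has to happen first.

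The paper closes this gap without any rectangularity claim, by letting the ``ultimately'' do the work. By Lemma~\ref{lem-inclusion}, \(\Nerendq{n}q\subseteq\Nerendq{n+1}\), so the first-letter sets \(\lettre{0}{\Nerendq{n}}\) form an increasing sequence of subsets of \(Q\) and stabilize to some \(Q_1\ni q\) for \(n\geq N\); this is the sole source of the ``ultimately''. For \(n>N\) and \(p\in\lettre{-2}{\Nerendq{n+1}}\), the equidistribution of Lemma~\ref{lem-lastletterNerode} (applied inside the Nerode class to the elements ending in \(pq\) versus those ending in \(qq\)) forces \(\lettre{0}{(\Nerendq{n+1}\cap Q^*pq)}\) to have the same cardinality as \(\lettre{0}{\Nerendq{n}}=Q_1\), hence to equal \(Q_1\); in particular some element \(\mot{u}\in\Nerendq{n+1}\cap qQ^*pq\) exists. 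Since \(\mot{u}\) begins with \(q\), Lemma~\ref{lem-precedeletter} lets one prepend \(q\) and get \(q\mot{u}\in\Nerendq{n+2}\) with penultimate letter \(p\), whence \(\lettre{-2}{\Nerendq{n+1}}\subseteq\lettre{-2}{\Nerendq{n+2}}\) and \(r_n\leq r_{n+1}\). So the missing ingredient in your argument is not a rectangle property of the pairs (first letter, last letter), but the stabilization of the first-letter sets combined with the counting lemma; supplying that would turn your sketch into the paper's proof.
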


\begin{proof}
Consider the sequence \((\Nerendq{n})_{n>0}\) and particularly the
sequence \((\lettre{0}{\Nerendq{n}})_{n>0}\): from
Lemma~\ref{lem-inclusion}, this sequence increases. 
Denote by~\(Q_1\) its limit: \(Q_1=\lettre{0}{\Nerendq{n}}\) for \(n\)
large enough, say~\(n\geq N\); in particular this set contains~\(q\).

Now, suppose~\(n> N\) and take \(p\in \lettre{-2}{\Nerendq{n+1}}\): clearly
\(\lettre{0}{(\Nerendq{n+1}\cap   Q^*pq)}\) is a subset of
\(Q_{1}=\lettre{0}{\Nerendq{n+1}}=\lettre{0}{\Nerendq{n}}\); since it
has the same cardinality as the set~\(\lettre{0}{(\Nerendq{n+1}\cap
  Q^*qq)}=\lettre{0}{\Nerendq{n}}\) by
Lemma~\ref{lem-lastletterNerode}, it is in fact equal
to~\(Q_{1}\). But \(q\) belongs to~\(Q_{1}\), so it
means that the set~\(\Nerendq{n+1}\cap qQ^*pq\) is not empty,
take~\(\mot{u}\) one of its elements. By
Lemma~\ref{lem-precedeletter}, \(q\) can precede \(\mot{u}\) in
\(\cc{q^{n+2}}\)  and \(q\mot{u}\in\Nerendq{n+2}\). Hence any penultimate letter
in~\(\Nerendq{n+1}\) is also a penultimate letter in~\(\Nerendq{n+2}\):
\[\lettre{-2}{\Nerendq{n+1}}\subseteq \lettre{-2}{\Nerendq{n+2}}\:.\]
The result is now a direct consequence of Proposition~\ref{prop-boundingrestrNerclasses}
\end{proof}

\section{Main result}\label{sec-main}
The following theorem is proved in~\cite{Klimann2016}:
\begin{theorem}\label{thm-kli16}
A semigroup generated by an invertible-reversible Mealy automaton
whose all powers are connected has exponential growth.
\end{theorem}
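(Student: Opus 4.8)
The plan is to show that the number of pairwise distinct elements of $\presm{\aut{A}}$ expressible as a product of exactly $n$ generators already grows exponentially in $n$; since all of them have length at most $n$, the note at the end of Section~\ref{sec-basic} then forces $\presm{\aut{A}}$ to have exponential growth. Fix a state $q$ and assume $|Q|\ge 2$ (if $|Q|=1$ the unique induced map is a permutation of $\Sigma$, hence of finite order, and the semigroup is finite). Since every power of $\aut{A}$ is connected, $\cc{q^n}=Q^n$ for all $n$, so the sequence of ratios associated with $q$ is constant equal to $|Q|>1$; thus $q$ has infinite order and, as recalled in Section~\ref{sec-min}, the integers $d_n:=\#\mz(\aut{A}^n)$ are unbounded. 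As $\aut{A}^n$ is connected and reversible, all its Nerode classes have a common cardinality $c_n$, so $d_n=|Q|^n/c_n$ is exactly the number of distinct actions induced by length-$n$ words. It therefore suffices to prove that $(d_n)_{n>0}$ grows exponentially.

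First I would rewrite $d_n$ through the $q$-restricted Nerode classes. As all powers are connected, $\ccNerode{q^n}=[q^n]$, so $\Nerendq{n}=[q^n]\cap Q^*q$ is the set of words $\mot{v}q$ with $\mot{v}\in Q^{n-1}$ and $\rho_{\mot{v}q}=\rho_{q^n}$; since $\rho_{\mot{v}q}=\rho_q\circ\rho_{\mot{v}}$ and $\rho_q$ is a permutation, this is equivalent to $\mot{v}\equiv q^{n-1}$. Hence $|\Nerendq{n}|=|[q^{n-1}]|=c_{n-1}$, and therefore
\[\frac{|\Nerendq{n+1}|}{|\Nerendq{n}|}=\frac{c_n}{c_{n-1}}\:,\qquad
\frac{d_n}{d_{n-1}}=\frac{|Q|\,c_{n-1}}{c_n}=\frac{|Q|}{|\Nerendq{n+1}|/|\Nerendq{n}|}\:.\]
Controlling the growth of $(d_n)$ thus reduces to controlling the ratio sequence of the $q$-restricted Nerode classes.

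Now I would apply Proposition~\ref{prop-stabilizingrestrNerclassesbest}: the sequence $\bigl(|\Nerendq{n+1}|/|\Nerendq{n}|\bigr)_{n>0}$ is ultimately increasing with limit at most $k=|Q|$. Being integer-valued (each quotient equals $c_n/c_{n-1}$, an integer by Proposition~\ref{prop-boundingrestrNerclasses}), it is ultimately constant, equal to some integer $a\le |Q|$, so $d_n/d_{n-1}=|Q|/a$ for all large $n$. The decisive point is that $a<|Q|$: were $a=|Q|$, the quotients $d_n/d_{n-1}$ would eventually equal $1$ and $(d_n)$ would be ultimately constant, contradicting its unboundedness. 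Hence $d_n/d_{n-1}\ge |Q|/(|Q|-1)>1$ for all large $n$, so $(d_n)$ grows at least geometrically with base $|Q|/a>1$, and exponential growth of $\presm{\aut{A}}$ follows.

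The main obstacle is exactly the passage from ``$(d_n)$ is unbounded'' to ``$(d_n)$ grows exponentially'': an unbounded integer sequence whose consecutive ratios lie in $\{1,\dots,|Q|\}$ could a priori increase only sparsely, and hence sub-exponentially. What excludes this is the monotonicity of the ratio sequence, which is the content of Proposition~\ref{prop-stabilizingrestrNerclassesbest} (itself built on Lemma~\ref{lem-inclusion}, Lemma~\ref{lem-lastletterNerode} and Lemma~\ref{lem-precedeletter}). Everything else is bookkeeping: the reduction to the minimizations $d_n$, the identity $|\Nerendq{n}|=c_{n-1}$ provided by the invertibility of $\rho_q$, and the elementary count $d_n=|Q|^n/c_n$.
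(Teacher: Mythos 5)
The paper does not actually prove this statement itself---it is imported from~\cite{Klimann2016}---so your attempt can only be compared with the general machinery developed here and with the paper's remark that the original argument rests on the inclusion \(\ccNerode{q^n}q\subseteq\ccNerode{q^{n+1}}\), valid when all powers are connected. Your skeleton (reduce to the minimization sizes \(d_n=|Q|^n/c_n\); show the ratio sequence \(c_n/c_{n-1}\) is integral, bounded by \(|Q|\) and ultimately monotone; rule out the limit value \(|Q|\) using the unboundedness of \(d_n\) coming from the infinite order of \(q\)) is sound and is the right shape of argument. The genuine problem is a hypothesis mismatch: you invoke Propositions~\ref{prop-boundingrestrNerclasses} and~\ref{prop-stabilizingrestrNerclassesbest}, but these, and the lemmas they rest on (Lemmas~\ref{lem-precedeletter}, \ref{lem-samecardFP}, \ref{lem-lastletterNerode}), are stated and proved in the paper only for \emph{bireversible} automata; their proofs genuinely use coreversibility through the backtrack actions. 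Theorem~\ref{thm-kli16} assumes only an invertible-reversible automaton, and invertible-reversible does not imply coreversible (the standard lamplighter automaton is invertible and reversible but not coreversible), so those citations do not apply as written.

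The gap is repairable precisely because of the identity you already established, \(|\Nerendq{n}|=c_{n-1}\), which shows that the relevant ratio is \(c_n/c_{n-1}\) with \(c_n=|[q^n]|\). In the connected setting everything you need about this ratio can be obtained from invertibility and reversibility alone. Grouping the words of \([q^n]\) by their last letter \(p\): a word \(\mot{v}\in Q^{n-1}\) satisfies \(\mot{v}p\in[q^n]\) if and only if \(\rho_{\mot{v}}=\rho_p^{-1}\circ\rho_{q^n}\), so for each \(p\) these \(\mot{v}\) form either the empty set or a full Nerode class of the connected reversible automaton \(\aut{A}^{n-1}\), all of whose Nerode classes have cardinality \(c_{n-1}\); hence \(c_n=|\lettre{-1}{[q^n]}|\cdot c_{n-1}\), giving integrality and the bound \(c_n/c_{n-1}\le|Q|\) without Proposition~\ref{prop-boundingrestrNerclasses}. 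Monotonicity follows because \(\mot{v}\equiv q^n\) implies \(q\mot{v}\equiv q^{n+1}\) and prepending \(q\) preserves the last letter, so \(\lettre{-1}{[q^n]}\subseteq\lettre{-1}{[q^{n+1}]}\); this replaces Proposition~\ref{prop-stabilizingrestrNerclassesbest}. With these two observations substituted for the two Propositions, your argument closes, and it is then essentially the \(\ccNerode{q^n}q\subseteq\ccNerode{q^{n+1}}\) strategy the paper attributes to~\cite{Klimann2016}.
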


The main result of this article, Theorem~\ref{thm-main}, is somehow a
generalization of this result because all the elements of a
semigroup generated by an invertible-reversible Mealy automaton whose
all powers are connected have infinite
order~\cite{klimann_ps:3state}. To prove this generalization, we need
to reinforce the hypothesis on the structure of the automaton which is
supposed here to be
bireversible and not only invertible-reversible, but we do not use
anymore the really strong hypothesis on the connected powers. Since it
is (easily) decidable if a Mealy automaton is bireversible, while the
condition on the powers is not known to be decidable or undecidable,
except in very restricted cases, the result here is way more
interesting and powerful, but it is also more tricky to establish. The
question of the existence of elements of infinite order in a
semigroup generated by a bireversible automaton is under study for
several years now and has been solved in quite a few
cases~\cite{Kli13,klimann_ps:3state,GK17}.

Here is the generalized version of the former theorem we prove in this section:

\begin{theorem}\label{thm-main}
A group generated by a bireversible Mealy automaton which contains
an element of infinite order has exponential growth.
\end{theorem}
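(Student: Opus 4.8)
The plan is to reduce to a single state $q$ of \emph{constant ratio} $K>1$ and then to play the exponential growth of the components $\cc{q^n}$ off against the slower growth of their Nerode classes. First I would realise the given infinite-order element as such a state. Writing it as $\rho_{\mot w}$ for a word $\mot w$ over the states of $\aut A\cup\aut A^{-1}$, which is again bireversible, the word $\mot w$ is a single state of a suitable power of $\aut A\cup\aut A^{-1}$; since it has infinite order its sequence of ratios is ultimately constant with value $>1$, so after passing to one more power I obtain a bireversible automaton together with a state $q$ of infinite order and \emph{constant} ratio $K>1$. As the group so generated embeds in $\pres{\aut A}$, it is enough to prove that this subgroup has exponential growth.

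Next I would set up the two competing quantities. Constant ratio gives $\#\cc{q^n}=K^{n-1}\#\cc q$. Writing $N_n=|\Nerendq{n}|$, the point of working with the suffix-restricted classes $\Nerendq{n}$ rather than the full classes $\ccNerode{q^n}$ is that the latter admit no inclusion relation between consecutive powers, whereas Lemma~\ref{lem-inclusion} gives $\Nerendq{n}q\subseteq\Nerendq{n+1}$; this is exactly what makes Proposition~\ref{prop-boundingrestrNerclasses} (each ratio $N_{n+1}/N_n$ is a positive integer) and Proposition~\ref{prop-stabilizingrestrNerclassesbest} (these ratios increase to a limit $\ell\le K$) available. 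Being integral and increasing, the ratios are ultimately equal to $\ell$, so $N_n$ grows like $\ell^{\,n}$. By Lemma~\ref{lem-lastletterNerode} the Nerode class $\ccNerode{q^n}$ of the connected bireversible component $\cc{q^n}$ splits into equal fibres, one per last letter and each of size $N_n$, whence $N_n\le|\ccNerode{q^n}|\le|Q|\,N_n$. Since $\cc{q^n}$ is connected and reversible, its minimisation has size $\#\mz(\cc{q^n})=\#\cc{q^n}/|\ccNerode{q^n}|$, so that $\#\cc{q^n}/(|Q|\,N_n)\le\#\mz(\cc{q^n})\le\#\cc{q^n}/N_n$, and both outer terms are of order $(K/\ell)^{n}$.

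The decisive step, and the one I expect to be the main obstacle, is to show the \emph{strict} inequality $\ell<K$, that is, that the Nerode classes grow strictly more slowly than the components. Here I would invoke the criterion recalled in Section~\ref{sec-min}: $q$ has infinite order if and only if the sizes $\#\mz(\cc{q^n})$ are unbounded. If we had $\ell=K$, then the upper bound $\#\cc{q^n}/N_n$ would be bounded, forcing $\#\mz(\cc{q^n})$ to be bounded and contradicting the infinite order of $q$; note that for this direction only the easy inequality $N_n\le|\ccNerode{q^n}|$ is needed. Therefore $\ell<K$, and the lower bound yields $\#\mz(\cc{q^n})\ge\#\cc{q^n}/(|Q|\,N_n)\ge C\,(K/\ell)^{n}$ with $K/\ell>1$, this time using the harder inequality $|\ccNerode{q^n}|\le|Q|\,N_n$ furnished by Lemma~\ref{lem-lastletterNerode}. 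Finally, the $\#\mz(\cc{q^n})$ classes of $\cc{q^n}$ are pairwise non-equivalent, hence represent distinct group elements, each expressed by a state $\mot u\in\cc{q^n}\subseteq Q^n$ and so of length at most $n$; thus the growth function is bounded below by $C\,(K/\ell)^{n}$, i.e. exponential, and the same therefore holds for $\pres{\aut A}$.
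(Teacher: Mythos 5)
Your proposal is correct and follows essentially the same route as the paper: the same reduction to a state \(q\) of constant ratio \(k>1\), the same sandwich of \(\#\mz(\cc{q^n})\) between \(\#\cc{q^n}/(|Q|\,|\Nerendq{n}|)\) and \(\#\cc{q^n}/|\Nerendq{n}|\) via Lemma~\ref{lem-lastletterNerode}, and the same contradiction argument (boundedness of the minimizations would force finite order) to get the strict inequality \(\ell<k\). Your concluding count is even slightly more direct than the paper's, which passes to the power \(q^{\alpha}\) with \((k/\ell)^{\alpha}>|Q|\) to make the sets \(I_n\) pairwise disjoint, whereas you simply note that the ball of radius \(n\) already contains the \(\#\mz(\cc{q^n})\) distinct elements represented by states of \(\cc{q^n}\); both arguments are valid.
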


Note that the result still holds for the generated semigroup and the
proof is easily adaptable.

\begin{proof}
Let \(\aut{A}=(Q,\Sigma,\delta,\rho)\) be a bireversible Mealy
automaton and \(\mot{u}\in (Q\sqcup Q^{-1})^*\) which induces an
action of infinite order. The Mealy automaton
\((\aut{A}\cup\aut{A}^{-1})^{|\mot{u}|}\) is bireversible,
\(\mot{u}\) is one if its state, and it generates a subgroup
of~\(\aut{A}\). So without lost of generality, we can suppose that
\(\mot{u}\) is in fact a state of~\(\aut{A}\). To be consistent with
the rest of the article, let us call it~\(q\).

For any integer \(i>0\), note \(r_i=\frac{\#\cc{q^{i+1}}}{\#\cc{q^i}}\).
The sequence of ratios associated to~\(q\) is of the form
\((r_1,r_2,\ldots,r_j,r_{j+1}=r_j,\ldots)\), where \(r_i\geq r_{i+1}\) for any
\(i\geq 1\) and \(r_i=r_j\) for any \(i\geq j\). Now, consider the 
component \(\cc{q^j}\) as a Mealy automaton and \(\mot{q}=q^j\) as its
state: the state \(\mot{q}\) induces an action of infinite 
order and the sequence of ratios associated to
\(\mot{q}\) is of the form \((r_j^j,r_j^j,\ldots)\): \(\mot{q}\) has
constant ratio. Moreover, \(\cc{\mot{q}}\) generates a subgroup of
the group \(\pres{\aut{A}}\), so if we prove that
\(\pres{\cc{\mot{q}}}\) has exponential growth, so has
\(\pres{\aut{A}}\). So, without loss of generality we
can suppose that \(q\) has a constant ratio, say~\(k\).

It is quite immediate to obtain the following inequalities:
\begin{equation}\label{eq-main}
\forall n>0,\,\frac{|\cc{q^n}|}{|Q|\times |\Nerendq{n}|}\leq \frac{|\cc{q^n}|}{|\ccNerode{q^n}|}\leq \frac{|\cc{q^n}|}{|\Nerendq{n}|}\:.
\end{equation}
Indeed, the right part is a consequence of the fact that
\(\Nerendq{n}\subseteq\ccNerode{q^n}\), and the left part of
Lemma~\ref{lem-lastletterNerode} and the fact
that \(\ccNerode{q^n}=\cup_{p\in Q}(\ccNerode{q^n}\cap Q^*p)\).

The central part in~\eqref{eq-main} is in fact the size of the
minimization of \(\cc{q^n}\), and the cardinality of~\(\cc{q^n}\) is
equal to~\(|Q|\times k^{n-1}\) since \(q\) has constant ratio~\(k\),
so~\eqref{eq-main} can be re-written as:
\begin{equation}\label{eq-main2}
\forall n>0,\,\frac{k^{n-1}}{|\Nerendq{n}|}\leq \#\mz(\cc{q^n})\leq
\frac{|Q|\times k^{n-1}}{|\Nerendq{n}|}\:.
\end{equation}

Let us prove that for \(n\) large enough, \(|\Nerendq{n+1}|<k\times
|\Nerendq{n}|\). From Proposition~\ref{prop-boundingrestrNerclasses}
we know that the corresponding non strict inequality is
satisfied. Now, from
Propositions~\ref{prop-boundingrestrNerclasses}
and~\ref{prop-stabilizingrestrNerclassesbest} we know that
for~\(N\) large enough, if the equality holds at rank~\(N\), it also holds
at any rank greater than~\(N\): for any \(n\geq N\),
\(|\Nerendq{n+1}|=k\times|\Nerendq{n}|\). This means that for
\(n\) large enough, the value of the \(n\)-th term of the sequence
\((|\Nerendq{n}|)_{n>0}\) is \(ck^{n-1}\), where \(c\) does not depend
on~\(n\). Hence by the right part of Equation~\eqref{eq-main2}, the
minimizations of the connected powers of~\(q\) have bounded size,
which implies that~\(q\) has finite order as seen in
Section~\ref{sec-min}, and this is in contradiction with the hypotheses.

Denote by \(\ell\) the limit of the sequence
\[\left(\frac{|\Nerendq{n+1}| }{|\Nerendq{n}|} \right)_{n>0}\]
(it exists from
Proposition~\ref{prop-stabilizingrestrNerclassesbest}): \(\ell>1\)
because \(q\) has infinite order and \(\ell<k\) from the above
paragraph. For \(n\) large enough, there exists a constant~\(c\) such
that \(|\Nerendq{n}|=\frac{\ell^{n}}{kc}\).

So Equation~\eqref{eq-main2} becomes:
\begin{equation}\label{eq-main3}
\forall n>N,\,c\left(\frac{k}{\ell}\right)^n\leq \#\mz(\cc{q^n})\leq
c|Q|\left(\frac{k}{\ell}\right)^{n}\:.
\end{equation}

Since \(\ell<k\), there exists \(\alpha\) such that
\(\left(\frac{k}{l}\right)^{\alpha}>|Q|\). Let us denote
\(\mot{u}=q^{\alpha}\) and \(K=\left(\frac{k}{\ell}\right)^{\alpha}\),
we have that for \(n\) large enough:
\begin{equation}\label{eq-main4}
c\cdot K^n\leq \#\mz(\cc{\mot{u}^n})\leq
c\cdot|Q|\cdot K^{n}<c\cdot K^{n+1}\leq \#\mz(\cc{\mot{u}^{n+1}})\leq
c\cdot|Q|\cdot K^{n+1}\:.
\end{equation}

Consequently, the minimizations of the components~\(\cc{\mot{u}^n}\)
are pairwise not isomorphic, for~\(n\) large enough, because they do not have
the same size. So their states induce different elements of the
group~\(\aut{A}\). Hence the sets
\[I_n=\{\rho_{\mot{v}}\mid
\mot{v}\text{ is a state of }\mz(\cc{\mot{u}^n})\}\]
are pairwise disjoint. By Equation~\eqref{eq-main4}, the growth of
the sequence \((I_n)_{n>0}\) is exponential, and so is the growth of~\(\aut{A}\).
\end{proof}

By combining Theorem~\ref{thm-main} with the fact that connected
bireversible Mealy automata of prime size cannot generate infinite
Burnside groups~\cite{GK17}, we have:
\begin{corollary}
Any infinite group generated by a bireversible connected Mealy automaton of
prime size has exponential growth.
\end{corollary}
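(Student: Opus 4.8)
The plan is to combine Theorem~\ref{thm-main} with the cited result of~\cite{GK17} on Burnside groups by a short logical argument; essentially no computation is required. First I would let \(G=\pres{\aut{A}}\) be the group generated by a bireversible connected Mealy automaton \(\aut{A}\) of prime size, and assume \(G\) is infinite. The whole task then reduces to producing a single element of infinite order in \(G\), since Theorem~\ref{thm-main} applies verbatim once such an element is available.

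The key observation is a dichotomy on the orders of elements of \(G\): either every element of \(G\) has finite order, or some element has infinite order. In the first case, \(G\) would be an infinite, finitely generated torsion group — that is, an infinite Burnside group — generated by a connected bireversible Mealy automaton of prime size. But this is precisely the situation excluded by~\cite{GK17}. Hence the first alternative cannot occur, and \(G\) must contain an element of infinite order.

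With such an element in hand, I would invoke Theorem~\ref{thm-main}: because \(\aut{A}\) is bireversible and \(\pres{\aut{A}}\) contains an element of infinite order, the group has exponential growth. This immediately yields the conclusion of the corollary.

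The only point requiring attention — and the closest thing to an obstacle, though it is a mild one — is to verify that the hypotheses of the two imported statements align exactly with those of the corollary. For~\cite{GK17} we need \(\aut{A}\) to be simultaneously bireversible, connected, and of prime size; for Theorem~\ref{thm-main} we need \(\aut{A}\) bireversible. Finite generation of \(G\), implicit in the very notion of a Burnside group and in the growth statement, is automatic since \(\aut{A}\) has finitely many states. As all of these conditions are present in the corollary's hypotheses, the combination is direct and the proof concludes in a few lines.
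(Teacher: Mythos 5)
Your proposal is correct and follows exactly the paper's argument: the paper likewise derives the corollary by noting that, by~\cite{GK17}, a connected bireversible Mealy automaton of prime size cannot generate an infinite Burnside group, so an infinite such group has an element of infinite order, and Theorem~\ref{thm-main} then gives exponential growth. Your extra care in checking that the hypotheses of the two cited results match those of the corollary is sound but adds nothing beyond what the paper leaves implicit.
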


Another consequence of Theorem~\ref{thm-main} concerns virtually
nilpotent groups. This class is important in the classification of
groups and contains in particular all the abelian groups. It is known
that any infinite virtually nilpotent group contains an element of
infinite order~\cite[Proposition~10.48]{DK06} and has polynomial
growth~\cite{wolf1968}. This leads to

\begin{corollary}\label{cor-virtnil}
No infinite virtually nilpotent group can be generated by a bireversible Mealy automaton.
\end{corollary}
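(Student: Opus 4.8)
The plan is to argue by contradiction, assembling Theorem~\ref{thm-main} together with the two classical facts about virtually nilpotent groups recalled just before the statement. Suppose some infinite virtually nilpotent group \(G\) were generated by a bireversible Mealy automaton. First I would invoke the structural result that every infinite virtually nilpotent group contains an element of infinite order~\cite[Proposition~10.48]{DK06}; so \(G\) possesses such an element.

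Then, because \(G\) is generated by a bireversible Mealy automaton and contains an element of infinite order, Theorem~\ref{thm-main} applies verbatim and yields that \(G\) has exponential growth. But Wolf's theorem~\cite{wolf1968} guarantees that every virtually nilpotent group has polynomial growth. These two conclusions are incompatible: recalling from Section~\ref{sec-basic} that the growth of a group is an equivalence class of growth functions, and that a polynomial is dominated by every exponential, the polynomial and exponential growth classes are disjoint. Thus \(G\) cannot have both, contradicting the assumption, and the corollary follows.

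I expect no real obstacle here, since the statement is a straightforward synthesis of Theorem~\ref{thm-main} with the two cited external results. The only point deserving a word of care is the elementary observation that the exponential and polynomial growth classes do not overlap, which is immediate from the definition of growth adopted in Section~\ref{sec-basic}; everything else is a direct invocation of already-established facts.
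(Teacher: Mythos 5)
Your argument is correct and is exactly the one the paper intends: combine the fact that an infinite virtually nilpotent group has an element of infinite order~\cite[Proposition~10.48]{DK06} with Theorem~\ref{thm-main} to get exponential growth, and contradict Wolf's polynomial growth theorem~\cite{wolf1968}. The paper gives no separate proof beyond this same synthesis, so your proposal matches it in substance and in route.
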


\subsection*{Acknowledgement} The author is very grateful to Dmytro
Savchuk for suggesting Corollary~\ref{cor-virtnil}.

\bibliography{exponentialgrowth_hal}

\end{document}